\newcommand{\alglinelabel}{%
  \addtocounter{ALC@line}{-1}
  \refstepcounter{ALC@line}
  \label
}
\setlist[itemize]{leftmargin=0.0in}
\newcommand{\myparagraph}[1]{\smallskip\noindent {\bf #1.}}
\theoremstyle{plain}
\newtheorem{theorem}{Theorem}[section]
\theoremstyle{definition}
\newtheorem{definition}[theorem]{Definition}
\theoremstyle{remark}
\icmltitlerunning{Approximate Nearest Neighbor Search with Window Filters}
\DeclareMathOperator*{\argmin}{arg\,min}
\newcommand{\emp}[1]{{\textbf{\textit{#1}}}}
\newcommand{\degree}{\ensuremath{\textit{degree}}}
\newcommand{\datasetname}[1]{\texttt{#1}}
\newcommand{\algname}[1]{\ensuremath{\mathsf{#1}}}
\newcommand*{\Resize}[2]{\resizebox{#1}{!}{$#2$}}%
\newcommand{\algorithmicdoinparallel}{\textbf{do in parallel}}
  \newcommand{\PARFOR}[2][default]{\ALC@it\algorithmicfor\ #2\ %
    \algorithmicdoinparallel\ALC@com{#1}\begin{ALC@for}}%
\begin{document}

\twocolumn[
\icmltitle{Approximate Nearest Neighbor Search with Window Filters}



\icmlsetsymbol{equal}{*}

\begin{icmlauthorlist}
\icmlauthor{Joshua Engels}{mit}
\icmlauthor{Benjamin Landrum}{umd}
\icmlauthor{Shangdi Yu}{mit}
\icmlauthor{Laxman Dhulipala
}{umd}
\icmlauthor{Julian Shun}{mit}
\end{icmlauthorlist}

\icmlaffiliation{mit}{Massachusetts Institute of Technology, CSAIL}
\icmlaffiliation{umd}{University of Maryland}

\icmlcorrespondingauthor{Joshua Engels}{jengels@mit.edu}

\icmlkeywords{Approximate Nearest Neighbor Search, Vector Database, Data Structure, Tree, Filtered Search, Embeddings}

\vskip 0.3in
]



\printAffiliationsAndNotice{}  

\begin{abstract}
We define and investigate the problem of \textit{c-approximate window search}: approximate nearest neighbor search where each point in the dataset has a numeric label, and the goal is to find nearest neighbors to queries within arbitrary label ranges. 
Many semantic search problems, such as image and document search with timestamp filters, or product search with cost filters, are natural examples of this problem. 
We propose and theoretically analyze a modular tree-based framework for transforming an index that solves the traditional c-approximate nearest neighbor problem into a data structure that solves window search. 
On standard nearest neighbor benchmark datasets with random label values, adversarially constructed embeddings, and image search embeddings with real timestamps, we obtain up to a $75\times$ speedup over existing solutions at the same level of recall.
\end{abstract}

\section{Introduction}
The nearest neighbor search problem has been widely studied for more than 30 years~\cite{arya1993approximate}. Given a dataset $D$, the problem requires the construction of an index that can efficiently answer queries of the form ``what is the closest vector to $x$ in $D$?" Solving this problem exactly degrades to a brute force linear search  in high dimensions~\cite{rubinstein2018hardness}, so instead both theoreticians and practitioners focus on the relaxed $c$-approximate nearest neighbor search problem (ANNS), which asks ``what is a vector that is no more than $c$ times farther away from $x$ than the closest vector to $x$ in $D$?"

Recent advances in vector embeddings for text, images, and other unstructured data have led to a  surge in research and industry interest in ANNS. 
This trend is reflected by a proliferation of commercial vector databases: Pinecone~\cite{pinecone2024overview}, Vearch~\cite{vearch2022search}, Weaviate~\cite{weaviate2022filters}, Milvus~\cite{milvus2022hybridsearch}, and Vespa~\cite{vespa2022semistructured}, among others. 
These databases offer hyper-parameter tuning, metadata storage, and solutions to a variety of related search problems. They especially tout their support of ANNS with metadata filtering as a key differentiating application. 

In this work, we are primarily interested in a difficult generalization of the $c$-approximate nearest neighbor problem that we term \emp{Window Search} (see \cref{def:problem}). 
Window search is similar to ANNS, except queries are accompanied by a ``window filter'', and the goal is to find the nearest neighbor to the query that has a label value within the filter. 
This problem has a large number of immediate applications. For instance, in document and image search, each item may be accompanied by a timestamp, and the user may wish to filter to an arbitrary time range (e.g., they may wish to search for hiking pictures, but only from last summer, or forum posts about a bug, but only in the days after a new version was released). Another application is product catalogs, where a user may wish to filter search results by cost. Finally, we note that a large class of emerging applications is large language model retrieval augmented generation, where by storing and retrieving information LLMs can reduce hallucinations and improve their reasoning~\cite{peng2023check}; 
window search may be critical when the LLM needs to recall something stored on a certain date or range of dates. 

Although this problem has many motivating examples, there is a dearth of papers examining it in the literature. 
Some vector databases analyze window search-like problem instances as an additional feature of their system, but this analysis is typically secondary to their main approach and too slow for large-scale real-world systems; as far as we are aware, we are the first to propose, analyze, and experiment with a non-trivial solution to the window search problem. 


Our contributions include the following:
\begin{enumerate}[topsep=1pt,itemsep=0pt,parsep=0pt,leftmargin=10pt]

    \item We formalize the $c$-approximate window search problem and propose and test the first non-trivial solution that uses the unique nature of numeric label based filters.

    \item We design multiple new algorithms for window search, including a modular tree-based framework and a label-space partitioning approach.

    \item We prove that our tree-based framework solves window search and give runtime bounds for the general case and for a specific instantiation with the Vamana ANNS algorithm~\cite{diskann}. 
    We also analyze optimal partitioning strategies for the label-space partitioning approach.
    
    \item We benchmark our methods against strong baselines and vector databases, achieving up to a $75\times$ speedup on real world embeddings and adversarially constructed datasets. 
\end{enumerate}


\section{Related Work}\label{sec:related}
 
\myparagraph{Filtered ANNS} 
\label{sec:prefiltering_and_postfiltering}
The two overarching naive approaches to filtered ANNS are \emph{prefiltering}, where the dataset is restricted to elements matching the filter before a spatial search over the remaining elements, and \emph{postfiltering}, where results from an unfiltered search are restricted to those matching the filter. Thus, one avenue for solving filtered ANNS has focused on augmenting existing ANNS algorithms with prefiltering or postfiltering strategies, resulting in solutions like VBASE~\cite{vbase} and Milvus~\cite{milvus2022hybridsearch}. VBASE, for example, performs beam search on a general purpose search graph and uses the order in which points are encountered as an approximation of relative distance to the query point, before finally postfiltering to find near points matching the predicate. Non-graph based indices can also be adapted with prefiltering and postfiltering to perform filtered search; for example, the popular Faiss library finds nearby clusters of points within an inverted file index, and then prefilters the vectors in those clusters based on an arbitrary predicate function~\cite{douze2024faiss}. 
While these methods support arbitrary filters, they struggle when filters greatly restrict the points relevant to a query~\cite{filtereddiskANN}. 

Another popular direction focuses on \textit{dedicated indices} for filtered ANNS, which consistently outperform their general-purpose counterparts~\cite{neurips23bigann}. For example, FilteredDiskANN, an adaptation of DiskANN~\cite{diskann} that supports filters that are conjunctive ORs of one or more boolean labels, builds a graph that can be traversed by a modified beam search excluding points not matching the filter~\cite{filtereddiskANN}. Another approach is CAPS, which is an inverted index where each partition stores a Huffman tree dividing points by their labels~\cite{gupta2023caps}. While these methods are the state of the art, they are restricted to filtering on boolean labels; they do not support window search. Finally, recent work~\cite{mohoney2023high} presents a tree based nearest neighbor engine for combined vector-metadata searches and show range-based filters as an application, but their partitioning requires historical queries and many of their gains come from batching queries to avoid redundant computation. Their method also only builds ANNS indices in the “leaves” of the tree, whereas we build indices at internal nodes, which is the key idea that ensures our method only needs to query a logarithmic number of ANNS indices. Thus, although their code is not open source, we do not expect their system to perform as well as ours on window search.

\myparagraph{Segment Trees}
Segment trees (and the closely related Fenwick
trees~\cite{fenwick1994new}) are tree data structures built over an
array that recursively sub-divide the array to obtain a balanced
binary tree~\cite{bentley1977algorithms}. By storing appropriate augmented values
at the internal nodes of this tree, these data structures can be
used to support a variety of queries over arbitrary intervals
in the array, e.g., computing the maximum value in any given query interval
$[l, r]$.
Segment and Fenwick trees can be generalized to higher
fanout trees, i.e., $B$-ary segment or Fenwick trees that have a
fanout of $B$ and a height of $\lceil \log_{B} n \rceil$~\cite{pibiri2021practical}.
Our work adapts these tree structures to the window search problem by designing a similar data structure that stores ANNS indices at internal tree nodes.
\citet{huang2023faster} use the Fenwick tree for filtered search within a clustering context, but their work only considers a prefix interval ($[0, r]$), and they use $k$d-trees, which are designed for low-dimensional data.

\myparagraph{Filtered Search in Relational Databases}
Traditional relational database systems support arbitrary range-based queries by constructing B-trees or log-structured merge trees~\cite{ilyas2008topk,qader2018comparative}. These databases have complex query planning systems that determine during execution whether and how to query these constructed indices (see, e.g.,~\cite{kurc1999querying}), but typically do not have support for ANNS. The few that do have support for ANNS use an existing open-source ANNS implementation to perform the search (e.g., pgvector~\cite{pgvector}, an ANNS add-on for PostgreSQL, uses HNSW~\cite{malkov2018efficient}.

\begin{table}[t]
    \centering
    \caption{Notation used in this paper. 
    }
    \small
    \begin{tabular}{cl}
        \toprule
        Symbol & Definition \\
        \midrule
        $D$ & Vectors to index\\
        $N$ & $|D|$, the cardinality of $D$ \\
        $V$ & Metric space $D$ is in, e.g., $R^n$\\
        $q$ & Query vector $q \in V$ \\
        $(a, b)$ & Window filter; see~\cref{def:filtered_dataset}\\
        $c$ & Approximation factor for window search\\
        $\texttt{dist}_V$ & Distance function between points in $D$\\
        $d$ & Running time to evaluate $\texttt{dist}_V$\\
        $A$ & Arbitrary $c$-ANN algorithm, e.g., Vamana\\
        $A_q$ & Query time of $A$\\
        $\beta$ & Split factor for a $\beta$-WST; see~\cref{alg:build_tree}\\
        $\alpha$ & Pruning parameter for Vamana\\
        $\delta$ & Doubling dimension\\
        $R$ & Set of closed integer ranges in $[1, \ldots, N]$\\
        $\mathsf{blowup}(R)$ & Max ratio of superset $\in R$ over range  length
        \\
        $\mathsf{cost}(R)$ & Sum of lengths of ranges in $R$ \\
        \bottomrule
    \end{tabular}
    \label{tab:notation}
\end{table}

\section{Definitions}

This section lays out definitions for the main problem that we study: window search. Notation for the next three sections can be found in~\cref{tab:notation}.

\begin{definition}
\label{def:labeled-dataset}[Labeled Dataset]
Consider a metric space $V$ with distance function $\texttt{dist}_V$. Given a label function $\ell: V \rightarrow \mathbb{R}$ and a set $D \subset V$, we define a \emp{labeled dataset} to be the pair $(D, \ell)$.
\end{definition}

\begin{definition}
\label{def:filtered_dataset}[Window Filtered Dataset]
Consider a labeled dataset $(D, \ell)$. We define a \emp{window filter} to be an open interval $( a, b)$ with $a, b \in \mathbb{R}$, and we define a \emp{window filtered dataset} to be $D_{(a, b)} = \{x \in D\ |\ \ell(x) \in (a, b)\}$.
\end{definition}

\begin{definition}
\label{def:problem}[Window Search]
Given a labeled dataset $(D, \ell)$, we define a \emp{query} to be a vector $q \in V$ and a window filter $(a, b)$, and we define the \emp{window filtered nearest neighbor} to be
$q^* = \argmin_{x \in D_{(a, b)}} \texttt{dist}_V(x, q)$.
\end{definition}

\begin{definition}\label{def:aws}[Approximate Window Search]
Finally, given a labeled dataset $(D, \ell)$, we define \emp{$c$-approximate window search} to be the task of constructing a data structure that takes in a query $q \in V$ with window filter $(a, b)$ and returns a point $y \in D_{(a, b)}$ such that
$\texttt{dist}_V(q, y) \le c \cdot \texttt{dist}_V(q, q^*)$,
or $\emptyset$ if $D_{(a, b)} = \emptyset$.
\end{definition}

\section{Window Search Algorithms}
In this section, we describe algorithms for solving the window search problem. In \cref{sec:naive_baslines}, we examine two naive baselines for solving window search; in \cref{sec:beta_wst}, we introduce a new data structure, the $\beta$-WST, and design an algorithm to query it; and in \cref{sec:query_methods}, we examine additional algorithms for solving window search. 

We note that with a fixed window filter $(a, b)$, a reasonable approach to solving window search is simply to index $D_{(a, b)}$ using an off-the-shelf ANNS algorithm and then query it with each $q$. Thus, we are interested in the more challenging problem where queries have \textit{arbitrary} window filters.

\subsection{Naive Baselines}
\label{sec:naive_baslines}

As we discuss in \cref{sec:prefiltering_and_postfiltering}, prefiltering and postfiltering are the current state of the art for filtered search. Here, we describe the specific way we implement them in more detail, as they are the main baselines that we compare against in our experiments.

\algname{Prefiltering} is a naive baseline that works by sorting all of the points by label ahead of time. Given a query $x$ with window filter $(a, b)$, we perform binary search on the sorted labels to find the start and end of the range that meet the filter constraints, and then find the distance between $x$ and every point in the range and return the closest point.

\algname{Postfiltering}~\cite{yu2023pecann,chen2020fast} is a second baseline that works by first building an index on all of $D$ using an ANNS algorithm $A$. To perform a window search, we query $A$ for $k \ge 1$ points, and then repeatedly double $k$ until at least one point is returned that has a label within $(a, b)$. Finally, we return the closest of these points. We additionally define a hyper-parameter called $\mathsf{final\_multiply}$; if this is greater than $1$, then we perform a final additional search with a $\mathsf{final\_multiply}$ times larger value of $k$.

\begin{figure}[t]
\centering
\centerline{\includegraphics[width=8.5cm]{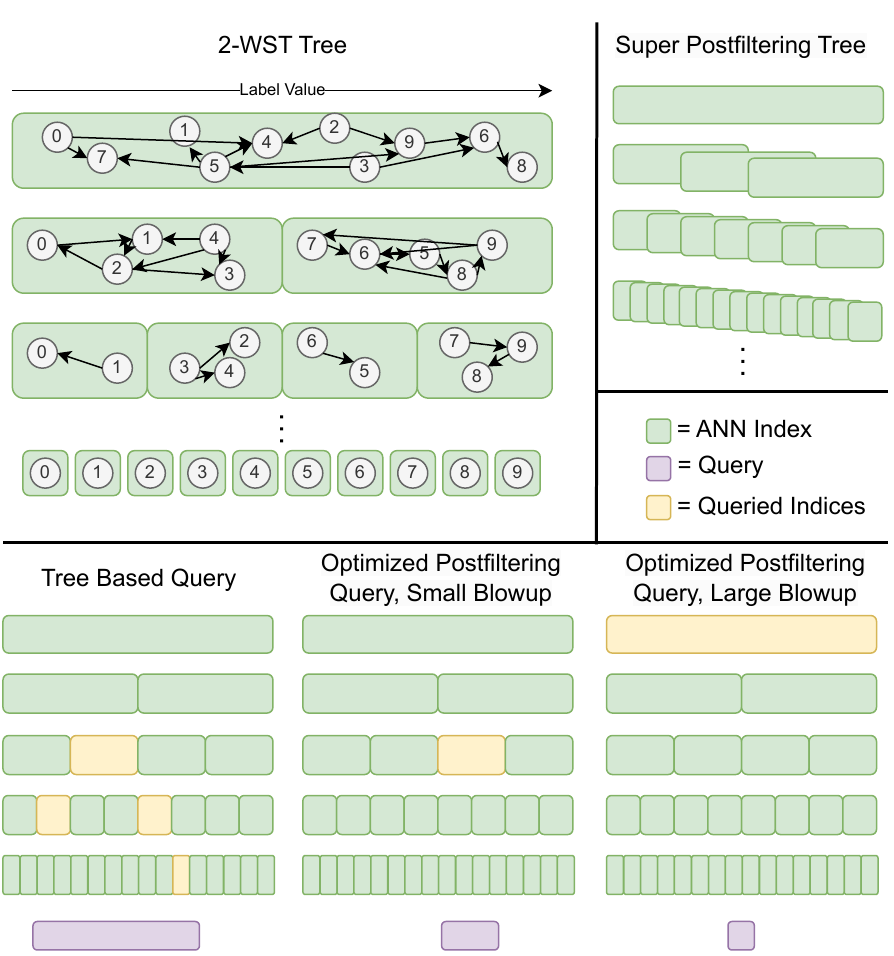}}
\vspace{0.15cm}
\caption{\textbf{Top left}: A $2$-WST. Each node of the tree in green contains a recursive partition of the entire dataset $D$ indexed by an ANNS algorithm (see \cref{alg:build_tree}). The graph in each green node represents a graph-based ANNS index built by an algorithm like Vamana. \textbf{Top right}: the structure of an example label space partitioning method that ensures that no optimized postfiltering query will have a large blowup (see \cref{thm:betterblowup}). \textbf{Bottom}: Different query methods; from left to right: a tree-based query as in \cref{alg:query}, an optimized postfiltering query with a small blowup (see \cref{def:blowup-factor}), and an optimized postfiltering query with a large blowup (see \cref{def:blowup-factor}). 
}
\label{fig:algorithm}
\end{figure}

\subsection{$\beta$-Window Search Tree}
\label{sec:beta_wst}



We now propose a data structure that we call a \textbf{$\beta$-Window Search Tree, or $\beta$-WST}. This data structure with $\beta = 2$ is illustrated in the top left of \cref{fig:algorithm} and the accompanying query method is illustrated in the bottom left. The overall idea to construct a $\beta$-WST is to split $D$ into $\beta$ subsets, one corresponding to each child node, construct an instance of $A$ at each node, and recurse on each child. We continue this process until the subset size is less than $\beta$, in which case we just store the points directly. 

More formally, let $x_1, \ldots, x_N$ be the points of $D$ sorted by $\ell$, such that $\ell(x_1) \le \ell(x_2) \ldots \le \ell(x_n)$. With a slight abuse of notation, we define the $\argmin$ of an empty set to be the empty set. A $\beta$-WST works as follows:

\textbf{Index Construction.} A $\beta$-WST $T$ can be constructed as shown in \cref{alg:build_tree}. 
In the base case, if the dataset $D$ is small, we do not construct any tree node (Lines~\ref{alg:build_tree:line:if}--\ref{alg:build_tree:line:return_null}).
Otherwise, we construct an ANNS index of $D$ (Line~\ref{alg:build_tree:line:get_index}). On Lines~\ref{alg:build_tree:line:sizes1}--\ref{alg:build_tree:line:sizes2}, we define the sizes for splitting $D$ into $\beta$ partitions, all of which are equally sized except the last, which may be smaller. On Line~\ref{alg:build_tree:line:initialize_children}, we initialize the children nodes. We then loop through every partition in parallel (Lines~\ref{alg:build_tree:line:parfor}--\ref{alg:build_tree:line:build_tree_recursive}) and recursively call $\texttt{BuildTree}$ on the set of points (sorted by label) corresponding to the start and end of the partition. Finally, on Line~\ref{alg:build_tree:line:return_final} we return the constructed tree, which 
is a tuple consisting of an ANNS index built on $D$, the result of $\texttt{BuildTree}$ called on each child along with the size of each child, and the point set $D$. 
\begin{algorithm}[t]
   \caption{BuildTree$(A, \beta, (D, \ell))$}
   \small
   \label{alg:build_tree}
   \begin{algorithmic}[1]
      \STATE {\bfseries Input:} Dataset $D$ with points $x_1, \ldots, x_{|D|}$ sorted by $\ell$, branching factor $\beta$, ANNS algorithm $A$.
      \label{alg:build_tree:line:input}
      \STATE {\bfseries Output:} $\beta$-Window Search Tree of $D$
      \label{alg:build_tree:line:output}
      \IF{$|D| < \beta$  } \label{alg:build_tree:line:if}
      \STATE \textbf{return} $(\texttt{NULL}, \texttt{NULL}, D)$
      \label{alg:build_tree:line:return_null}
      \ENDIF
      \STATE $\texttt{index} \gets A(D)$

    \COMMENT{The next two lines define the sizes of the $\beta$ subsets; all are equal size except the last}
      
      \label{alg:build_tree:line:get_index}
      \STATE $\texttt{sizes}[1, \ldots, \beta - 1] \gets \lceil |D| / \beta \rceil$
      \label{alg:build_tree:line:sizes1}
      \STATE $\texttt{sizes}[\beta] \gets |D| - (\beta - 1) \cdot \lceil |D| / \beta \rceil$ 
      \label{alg:build_tree:line:sizes2}
      \STATE $\texttt{children}[1, \ldots, \beta] \gets \texttt{NULL}$
      \label{alg:build_tree:line:initialize_children}
      \PARFOR{$i \gets 1$ \textbf{to} $\beta$} \label{alg:build_tree:line:parfor}
          \STATE $\texttt{start} \gets (i - 1) \cdot \lceil |D| / \beta \rceil + 1$
          \label{alg:build_tree:line:start_calculation}
          \STATE $\texttt{end} \gets \texttt{start} + \texttt{sizes}[i]$
          \label{alg:build_tree:line:end_calculation}
          \STATE $\texttt{children}[i] \gets$ BuildTree$(A, \beta, (D_{(x_{\texttt{start}}, x_{\texttt{end} - 1})}, \ell))$ 
          \label{alg:build_tree:line:build_tree_recursive}
      \ENDFOR
      \STATE \textbf{return} $(\texttt{index}, (\texttt{children}, \texttt{sizes}), D)$
      \label{alg:build_tree:line:return_final}
   \end{algorithmic}
\end{algorithm}

\textbf{Querying the Index.} We query a $\beta$-WST using~\cref{alg:query}. The input is a $\beta$-WST $T$ as built by~\cref{alg:build_tree}, a query point $q$ , and a window filter $(a,b)$. At a high level, \cref{alg:query} recurses through the tree constructed by~\cref{alg:build_tree} and queries instances of $A$ that union together to equal the entire filtered dataset. If the dataset is small and the index is a leaf node $\texttt{NULL}$, we do a brute force search over $D$ (Lines~\ref{alg:query:line:leaf}--\ref{alg:query:line:argmin1}). If the window filter $(a,b)$ covers all points, we query the ANNS $\texttt{index}$ and return the result (Lines~\ref{alg:query:line:all}--\ref{alg:query:line:ann}). 
Otherwise, $D$ has some points that do not have a label in $(a, b)$, so we loop over each child (Line~\ref{alg:query:line:loop}) and recurse into it if some of the points in the child meet the query's window filter constraint. Finally, for each one of these children that we recurse into, we add the returned points to a candidate list $\mathcal{L}_{\text{cand}}$, and return the closest point from $\mathcal{L}_{\text{cand}}$ on Line~\ref{alg:query:return}. 

\subsection{Additional Query Methods}
\label{sec:query_methods}

In addition to \cref{alg:query} above, we examine a number of additional query methods for window search that come with various trade-offs.

\begin{list}{\textbullet}{%
    \setlength{\leftmargin}{0.5em}
    \setlength{\itemindent}{0em}
    \setlength{\labelwidth}{\itemindent}
    \setlength{\labelsep}{0.5em}
    \setlength{\listparindent}{1em}
    \setlength{\itemsep}{0em}
    \setlength{\parsep}{0em}
    \setlength{\topsep}{0em}
    \setlength{\partopsep}{0em}
}

    \item \algname{Optimized Postfiltering} uses the index built by \cref{alg:build_tree} but uses a novel query algorithm, shown in \cref{alg:optimized_postfiltering}. Given a query $x$ with window filter $(a, b)$, \algname{Optimized Postfiltering} finds the smallest subset $S$ of $D$ corresponding to an index that we built $I = A(S)$ where $D_{(a, b)} \subset S$, and then queries that index using the same procedure as described in \algname{Postfiltering}. A ``small blowup" query is one in which the smallest subset $S$ is not that much larger than $D_{(a, b)}$, whereas a ``large blowup" query is one where $S$ is much larger than $D_{(a, b)}$. These small and large blowup queries are shown on the bottom right of \cref{fig:algorithm}. Blowup factor is defined formally in \cref{def:blowup-factor}.

    \item \algname{Three Split} also uses the index built by \cref{alg:build_tree} and a novel querying algorithm, shown in \cref{alg:three_split}. Similar to  \cref{alg:query}, a query initially finds the highest level where any partition at all is entirely contained in the window filter, and then does a query on every one of these partitions. Instead of recursing further down the tree, however, \algname{Three Split} then does an \algname{Optimized Postfiltering} call on each of the remaining label subranges on each side of the middle ``covered" label portion. Because we fill in the middle first, we are guaranteed that there can be no ''large blowup" case like in \algname{Optimized Postfiltering}.
    
    \item \algname{Super Postfiltering} is the same as \algname{Optimized Postfiltering} except that it operates on an arbitrary set of indexed subsets of $D$ and not necessarily the ones constructed by \cref{alg:build_tree}. One example of such a data structure is analyzed in \cref{thm:betterblowup} and visualized in the top right of \cref{fig:algorithm}.

\end{list}

\begin{algorithm}[t]
   \caption{Query$(T, q, (a, b))$}
   \small
   \label{alg:query}
\begin{algorithmic}[1]
   \STATE {\bfseries Input:} $\beta$-WST $T = (\texttt{index}, (\texttt{children}, \texttt{sizes}), D)$, with points $x_1, \ldots, x_{|D|} \in D$  sorted by $\ell$, query $q$, window filter $(a, b)$.
   \STATE {\bfseries Output:} Approximate window-filtered nearest neighbor $y$, or $\emptyset$ if no points in $D$ meet window filter constraint.
   \IF{$\texttt{index} = \texttt{NULL}$}\alglinelabel{alg:query:line:leaf}
   \STATE \textbf{return} $\argmin_{y \in D_{(a, b)}} \texttt{dist}_V(q, y)$  \alglinelabel{alg:query:line:argmin1}
   \ENDIF
   \IF{$(\ell(x_1), \ell(x_{|D|})) \subset (a, b)$}\label{alg:query:line:all}
   \STATE \textbf{return} $\texttt{index}(q)$ \label{alg:query:line:ann}
   \ENDIF
    \STATE $\texttt{start} \gets 1$, $\mathcal{L}_{\text{cand}} \gets \emptyset$
   \FOR{$i \gets 1$ \textbf{to} $\beta$ 
   } \label{alg:query:line:loop}
   \STATE $\texttt{end} \gets \texttt{start} + \texttt{sizes}[i]$
    \IF{$\ell(x_{\texttt{start}}), \ell(x_{\texttt{end} - 1})\cap (a, b) \ne \emptyset$}
   \STATE $\mathcal{L}_{\text{cand}} \gets \mathcal{L}_{\text{cand}} \cup \text{Query}(\texttt{children}[i], q, (a,b))$ \label{alg:query:line:recurse}
   \ENDIF
   \STATE $\texttt{start} \gets \texttt{end}$
   \ENDFOR
   \STATE \textbf{return} $\argmin_{y \in \mathcal{L}_{\text{cand}}} \texttt{dist}_V(q, y)$ \label{alg:query:return}
\end{algorithmic}
\end{algorithm}

\begin{algorithm}[t]
    \caption{OptimizedPostfiltering$(T, q, (a, b))$}
\small
\label{alg:optimized_postfiltering}
\begin{algorithmic}[1]
   \STATE {\bfseries Input:} $\beta$-WST $T = (\texttt{index}, (\texttt{children}, \texttt{sizes}), D)$, query $q$, window filter $(a, b)$.
   \STATE {\bfseries Output:} Approximate window-filtered nearest neighbor $y$, or $\emptyset$ if no points in $D$ meet window filter constraint.\;
   \STATE $\texttt{index} \gets$ smallest index in $T$ containing all points with labels in $(a, b)$\;
   \STATE $k \gets 1$\;
   \WHILE{$k < N$}
      \STATE $\mathcal{L}_{\text{unfiltered}} \gets \texttt{index}(q, k)$
      \STATE $\mathcal{L}_{\text{cand}} \gets \{x \in \mathcal{L}_{\text{cand}} \; | \: \ell(x) \in (a, b)\}$\;
      \IF{$\mathcal{L}_{\text{cand}} \ne \emptyset$}
        \STATE \textbf{return} $\argmin_{y \in \mathcal{L}_{\text{cand}} \cap D_{(a, b)}} \texttt{dist}_V(q, y)$\;
      \ENDIF
      \STATE $k \gets 2k$\;
   \ENDWHILE
   \STATE \textbf{return} $\emptyset$\;
\end{algorithmic}
\end{algorithm}

\begin{algorithm}[t]
    \caption{ThreeSplit$(T, q, (a, b))$}
\small
\label{alg:three_split}
\begin{algorithmic}[1]
   \STATE {\bfseries Input:} $\beta$-WST $T = (\texttt{index}, (\texttt{children}, \texttt{sizes}), D)$, query $q$, window filter $(a, b)$.
   \STATE {\bfseries Output:} Approximate window-filtered nearest neighbor $y$, or $\emptyset$ if no points in $D$ meet window filter constraint.\;
   \STATE $\texttt{index}, (a', b') \gets$ index in $T$ containing the most points with labels in $(a, b)$, label range of points in \texttt{index}\;
   \STATE $\mathcal{L}_{\text{cand}} = \text{OptimizedPostfiltering}(T, q, (a, a')) \cup \texttt{index}(q) \cup \text{OptimizedPostfiltering}(T, q, (b', b))$\;
    \STATE \textbf{return} $\argmin_{y \in \mathcal{L}_{\text{cand}} \cap D_{(a, b)}} \texttt{dist}_V(q, y)$\;
\end{algorithmic}
\end{algorithm}

\section{Theoretical Analysis}

\subsection{Analysis of Building and Querying a $\beta$-WST}

In our analysis in this section, we assume without loss of generality that $N$ is a power of $\beta$. Removing this assumption would lead to more floor and ceiling operators in \cref{thm:runtime} and leave our other results unchanged. We defer proofs to \cref{sec:proofs}. We will first analyze the construction time and memory of \cref{alg:build_tree}, and then we will prove the correctness and analyze the query time of \cref{alg:query}.

Consider some function $A_f$ parameterized by the dataset $D$, dataset size $N$, and subset size $m$. For example, $A_f$ may be a construction time function or a memory function. This function evaluated on all nodes of a $\beta$-WST is $$O \left( \sum_{j = 0}^{\log_\beta N} \beta^j A_f(D, N \cdot \beta^{-j})\right).$$ 

Since we use Vamana indices in our experiments, we now apply this result to obtain the running time of \cref{alg:build_tree} (the construction time) and the memory of the resulting index. We use a recent analysis for Vamana graph-based search~\cite{indyk2023worst} that assumes a ``slow-preprocessing'' index construction with runtime $O(N^3)$ and memory $O(N (4\alpha)^\delta \log \Delta)$. Letting $\Delta$ be the \textit{aspect ratio} of $D$, i.e., the ratio between the maximum and minimum distances between any two pairs of points, we have the following result:

\begin{restatable}{lemma}{existingannmemory}
\label{lem:existingannmemory}
\cref{alg:build_tree} instantiated with a ``slow preprocessed" $\alpha$-Vamana graph runs in time
$$O\left(\frac{1}{1 - \beta^{-2}} N^3\right) = O(N^3)$$ 
and returns a $\beta$-WST of memory
$$O\left((4\alpha)^\delta \log (\Delta) N \log_{\beta} N\right).$$
\end{restatable}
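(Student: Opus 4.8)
The plan is to combine the per-node accounting formula stated just above the lemma with the ``slow-preprocessing'' Vamana bounds, and then evaluate two elementary sums. Since $N$ is assumed to be a power of $\beta$, the tree built by \cref{alg:build_tree} has exactly $\log_\beta N + 1$ levels; level $j$ (with the root at $j=0$) consists of $\beta^{j}$ nodes, each holding a subset of $D$ of size $N\beta^{-j}$, on which a fresh $\alpha$-Vamana index is built (the construction does not reuse work between a parent and its children). Hence, by the displayed formula, the total value of any per-node quantity $A_f$ over the whole $\beta$-WST is $O\!\left(\sum_{j=0}^{\log_\beta N}\beta^{j}A_f(D, N\beta^{-j})\right)$, and the lemma follows by specializing $A_f$ to the slow-preprocessing construction time and to the slow-preprocessing memory.

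For the running time I would substitute the slow-preprocessing bound $A_f(D,m)=O(m^{3})$ for building an $\alpha$-Vamana graph on $m$ points, giving
\[
O\!\left(\sum_{j=0}^{\log_\beta N}\beta^{j}\bigl(N\beta^{-j}\bigr)^{3}\right)
= O\!\left(N^{3}\sum_{j=0}^{\log_\beta N}\beta^{-2j}\right)
= O\!\left(\frac{1}{1-\beta^{-2}}\,N^{3}\right),
\]
where the geometric series converges since $\beta^{-2}<1$ (as $\beta\ge 2$); because $\tfrac{1}{1-\beta^{-2}}\le \tfrac43$ is an absolute constant, this is $O(N^{3})$.

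For the memory I would substitute $A_f(D,m)=O\!\left(m\,(4\alpha)^{\delta}\log\Delta\right)$. Strictly, the Vamana space bound for a node at level $j$ is stated in terms of the aspect ratio $\Delta_j$ and doubling dimension $\delta_j$ of \emph{that node's} subset $S_j\subseteq D$, so I first argue that these are controlled by the global quantities: the aspect ratio is monotone under restriction, $\Delta_j\le\Delta$, since the largest pairwise distance only decreases and the smallest only increases on a subset; and the doubling dimension of a subset exceeds that of the ambient space by at most a constant factor (cover a ball of $S_j$ by ambient balls of half the radius and intersect each with $S_j$), so $(4\alpha)^{\delta_j}$ is absorbed — or one simply reads $\delta$ as an upper bound on the doubling dimension of every indexed subset, per \cref{tab:notation}. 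With $A_f(D,m)=O(m(4\alpha)^{\delta}\log\Delta)$ the sum telescopes:
\[
O\!\left(\sum_{j=0}^{\log_\beta N}\beta^{j}\cdot N\beta^{-j}(4\alpha)^{\delta}\log\Delta\right)
= O\!\left((4\alpha)^{\delta}\log\Delta\cdot N\sum_{j=0}^{\log_\beta N}1\right)
= O\!\left((4\alpha)^{\delta}\log(\Delta)\,N\log_\beta N\right),
\]
each level contributing the same $O(N(4\alpha)^{\delta}\log\Delta)$ over $\log_\beta N+1$ levels; the point sets stored verbatim at the nodes add only $O(N\log_\beta N)$, dominated by this term.

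I expect the calculation itself to be entirely routine. The only genuine subtlety is the one flagged in the memory step: ensuring the Vamana space bound, stated for the dataset being indexed, is legitimately applied with parameters $(\Delta,\delta)$ that upper-bound those of every subset appearing in the tree. The aspect-ratio monotonicity is immediate; the doubling-dimension comparison uses the standard constant-factor fact about subsets, or is sidestepped by the convention in \cref{tab:notation}. Beyond that, everything is substitution into the given per-node formula followed by summing a convergent geometric series and a constant series.
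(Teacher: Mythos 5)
Your proposal is correct and follows essentially the same route as the paper: substitute the slow-preprocessing Vamana bounds into the per-node sum $O\bigl(\sum_{j}\beta^{j}A_f(D,N\beta^{-j})\bigr)$, evaluate a convergent geometric series for the $m^3$ build time and a constant-per-level series for the linear memory term. Your extra care in checking that the aspect ratio and doubling dimension of each indexed subset are controlled by those of $D$ is a point the paper only spells out later (in the proof of \cref{lem:existingannruntime}), but it does not change the argument.
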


We now move on to our main runtime theorem, which both proves that \cref{alg:query} indeed solves $c$-approximate window search and upper bounds the running time for an arbitrary ANNS index $A$:

\begin{restatable}{theorem}{runtime}
\label{thm:runtime}
If $A$ can build an index that answers $c$-ANN queries on an arbitrary size $m$ subset of $D$ with query time $O(A_q(D, m))$, and a distance computation in $V$ takes $d$ work, then \cref{alg:query} solves the $c$-approximate window search problem with running time      
$$O\left(\beta\log_\beta(N)d + \beta\sum_{j=0}^{\log_\beta{N}} A_q(D, N \cdot \beta^{-j})\right).$$
\end{restatable}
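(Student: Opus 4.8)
The plan is to prove correctness and the runtime bound by induction on the height of the $\beta$-WST, following the recursive structure of \cref{alg:query}. For correctness, I would argue that on any call $\text{Query}(T, q, (a,b))$, the union of the ANNS indices actually queried (together with any brute-force leaf scans) covers exactly $D_{(a,b)}$, and each such queried index is a $c$-ANN index on a subset $S$ with $D_{(a,b)} \cap S = $ the relevant portion. Concretely: in the leaf case (Lines~\ref{alg:query:line:leaf}--\ref{alg:query:line:argmin1}) we return the exact nearest neighbor in $D_{(a,b)}$, so the claim holds trivially. In the ``fully covered'' case (Lines~\ref{alg:query:line:all}--\ref{alg:query:line:ann}), $D_{(a,b)} = D$, so $\texttt{index}(q)$ is a $c$-approximation to $q^\ast$ by the ANNS guarantee of $A$. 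In the recursive case, the children partition $D$ into contiguous label-sorted blocks; we recurse into child $i$ precisely when its label interval intersects $(a,b)$, and by the inductive hypothesis that recursive call returns a point $y_i \in (D \cap \text{child}_i)_{(a,b)}$ that is a $c$-approximation to the nearest neighbor of $q$ within that child's filtered points. Since $D_{(a,b)}$ is the disjoint union over intersecting children of $(D\cap\text{child}_i)_{(a,b)}$, the true optimum $q^\ast$ lies in one of these children, so $\min_i \texttt{dist}_V(q, y_i) \le c\,\texttt{dist}_V(q, q^\ast)$, and the $\argmin$ over $\mathcal{L}_{\text{cand}}$ on Line~\ref{alg:query:return} returns a valid answer; emptiness propagates correctly because $\mathcal{L}_{\text{cand}} = \emptyset$ exactly when no child intersects $(a,b)$, i.e., $D_{(a,b)} = \emptyset$.

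For the runtime, the key structural observation is that at each level $j$ of the tree ($j = 0$ the root), the set of nodes whose label interval intersects $(a,b)$ but is not fully contained in $(a,b)$ has size at most $2$: since $(a,b)$ is an interval and the level-$j$ nodes partition the sorted points into consecutive blocks, only the block containing $a$ and the block containing $b$ can be ``straddling''; every block strictly between them is fully contained (and the recursion stops there after one index query via Lines~\ref{alg:query:line:all}--\ref{alg:query:line:ann}), and every block outside $[a,b]$ is not visited. Hence at level $j$ we make a recursive descent into at most $2$ straddling nodes, and each straddling node at level $j$ spawns at most $\beta$ queries at level $j+1$: at most $2$ of these are straddling (recursed into further) and at most $\beta - 2$ are fully-contained (one $c$-ANN query each, on a subset of size $N\beta^{-(j+1)}$), plus the $\beta$ interval-intersection tests which cost $O(\beta)$ work using the precomputed $\texttt{sizes}$ and labels. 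Summing the ANNS-query cost over all levels: at level $j$ we perform $O(\beta)$ index queries on subsets of size $N\beta^{-j}$, giving $O(\beta \sum_{j=0}^{\log_\beta N} A_q(D, N\beta^{-j}))$; summing the bookkeeping and the final $\argmin$ over candidate lists (each of size $O(\beta)$ per straddling node, $O(\beta \log_\beta N)$ total candidates, each costing $d$ for a distance evaluation) gives $O(\beta \log_\beta(N) d)$. Adding these yields the stated bound.

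I expect the main obstacle to be the careful accounting at the boundary between ``straddling'' and ``fully contained'' nodes — in particular verifying that a straddling node's children contribute at most $\beta - 2$ fully-contained index queries and exactly the at-most-$2$ straddling recursions, and confirming that the leaf/base case ($|D| < \beta$, where $\texttt{index} = \texttt{NULL}$) is reached only at the bottom level and contributes only an $O(\beta d)$ brute-force term per leaf, so that it does not blow up the per-level count. A secondary subtlety is reconciling the half-open versus open-interval conventions: \cref{def:filtered_dataset} uses an open filter $(a,b)$, while the algorithm compares against label endpoints $\ell(x_1), \ell(x_{|D|})$ and uses the pair notation $(\ell(x_{\texttt{start}}), \ell(x_{\texttt{end}-1}))$ for block ranges; I would note that ties in label values only affect which contiguous block a boundary point falls into and do not affect either correctness (the covering argument is unchanged) or the at-most-$2$-straddling-nodes-per-level count, since the argument depends only on $(a,b)$ being an interval and the blocks being consecutive in sorted order. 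With the ``power of $\beta$'' assumption stated in the text, no floor/ceiling corrections are needed, and the geometric sum $\sum_j \beta^j$ is not needed here since the query touches only $O(\beta)$ nodes per level rather than all $\beta^j$ of them — this is precisely the point of building indices at internal nodes.
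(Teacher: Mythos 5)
Your proposal is correct and follows essentially the same route as the paper's proof: correctness via induction on the recursive tree structure (covering $D_{(a,b)}$ exactly and propagating the $c$-approximation up through the $\argmin$), and the runtime via the observation that at most two ``straddling'' nodes exist per level, so each level incurs $O(\beta)$ ANN queries on subsets of size $N\beta^{-j}$ plus $O(\beta d)$ bookkeeping. The only nitpick is the per-node split of children: a straddling node has at most $\beta-1$ (not $\beta-2$) fully-contained children, but this does not affect the $O(\beta)$ per-level bound or the final result.
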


Many theoretical ANN results in the literature have a runtime of $O(N^\rho)$ for a constant $\rho$, e.g., LSH~\cite{andoni2008near} and $k$-nearest neighbor graphs~\cite{prokhorenkova2020graph}. Other results have a runtime that is parameterized only with constants describing the data distribution, and have no reliance on $N$. By applying \cref{thm:runtime}, we have the following results for these common function classes:
\begin{restatable}{lemma}{commonfuncclasses}
\label{lem:commonfuncclasses}
If $A$ is a $c$-ANN algorithm with $A_q(D, m) = O(Cdm^\rho)$ for $\rho \in (0, 1)$ for some constant $C$ depending on $D$, the running time of \cref{alg:query} is
$$O\left(\frac{C\beta dN^\rho}{1 - \beta^{-\rho}}\right).$$
If $A$ is a $c$-ANN algorithm with $O(A_q(D, m)) = O(A_q(D))$, then the running time of \cref{alg:query} is
$$O\left(\beta \log_\beta(N)[d + A_q(D)]\right).$$
\end{restatable}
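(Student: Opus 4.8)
The plan is to obtain both bounds as direct corollaries of \cref{thm:runtime}: correctness of \cref{alg:query} is already established there, so only the running-time estimates remain, and each reduces to substituting the assumed form of $A_q$ and summing a geometric series.

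For the first case I would start from the bound of \cref{thm:runtime},
$$O\!\left(\beta\log_\beta(N)d + \beta\sum_{j=0}^{\log_\beta N} A_q(D, N\beta^{-j})\right),$$
and plug in $A_q(D, m) = Cdm^\rho$. The summation becomes
$$\beta\sum_{j=0}^{\log_\beta N} Cd\,(N\beta^{-j})^\rho = C\beta dN^\rho \sum_{j=0}^{\log_\beta N} \big(\beta^{-\rho}\big)^{j}.$$
Since $\rho \in (0,1)$ gives $\beta^{-\rho} < 1$, the partial geometric sum is bounded above by $\sum_{j=0}^{\infty}(\beta^{-\rho})^j = \frac{1}{1-\beta^{-\rho}}$, so this term is $O\!\left(\frac{C\beta dN^\rho}{1-\beta^{-\rho}}\right)$. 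It then remains to check that the additive term $\beta\log_\beta(N)d$ is absorbed: because $\log_\beta N = O(N^\rho)$ for any fixed $\rho>0$ and $\frac{1}{1-\beta^{-\rho}} \ge 1$, this term is $O(\beta dN^\rho) = O\!\left(\frac{C\beta dN^\rho}{1-\beta^{-\rho}}\right)$, treating $C$ as a fixed positive constant. Combining gives the claimed bound.

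For the second case, substituting $A_q(D,m) = A_q(D)$ makes the summand independent of $j$, so the sum over the $\log_\beta N + 1$ levels contributes $\beta(\log_\beta N + 1)A_q(D) = O(\beta\log_\beta(N)A_q(D))$; adding the $\beta\log_\beta(N)d$ term yields $O\!\left(\beta\log_\beta(N)[d + A_q(D)]\right)$. The only step that requires any care is the domination check in the first case — verifying that the $O(\beta\log_\beta(N)d)$ traversal/distance term does not dominate the polynomial query term — and this is immediate once we recall $\log_\beta N = O(N^\rho)$; there is no real obstacle in this lemma beyond bookkeeping on the geometric series.
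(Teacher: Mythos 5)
Your proposal is correct and follows essentially the same route as the paper's proof: substitute the assumed form of $A_q$ into the bound of \cref{thm:runtime}, bound the resulting geometric series by $\frac{1}{1-\beta^{-\rho}}$ in the first case, and note the summand is constant across the $O(\log_\beta N)$ levels in the second. Your explicit check that the $O(\beta\log_\beta(N)d)$ term is absorbed via $\log_\beta N = O(N^\rho)$ is a detail the paper leaves implicit, but it is the same argument.
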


We can again apply \cref{lem:commonfuncclasses} using recent $c$-ANN results for Vamana graph-based search~\cite{indyk2023worst}. 
We have the following result:
\begin{restatable}{lemma}{existingannruntime}
\label{lem:existingannruntime}
\cref{alg:query} instantiated with a ``slow preprocessed" $\alpha$-Vamana graph solves the $c$-approximate window search problem in any metric space on a dataset with doubling dimension $\delta$ and aspect ratio $\Delta$ in running time
$$\Resize{8.3cm}{O\left(\beta\log_\beta(N)\left[d + \log_\alpha\left(\frac{\Delta}{(\alpha - 1 )(c - \frac{\alpha + 1}{\alpha - 1})}\right)(4\alpha)^{\delta} \log \Delta\right]\right)}.$$
\end{restatable}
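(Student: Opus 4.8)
The plan is to combine \cref{thm:runtime} (which already establishes that \cref{alg:query} solves $c$-approximate window search and gives the generic runtime) with \cref{lem:commonfuncclasses} and the Vamana query-time bound from \citet{indyk2023worst}. So the statement is essentially a corollary: the only real work is bookkeeping the Vamana query-time expression and checking which branch of \cref{lem:commonfuncclasses} applies.

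First I would recall the ``slow-preprocessed'' $\alpha$-Vamana query-time bound from \citet{indyk2023worst}. That result says that a query on an $m$-point dataset with doubling dimension $\delta$ and aspect ratio $\Delta$ takes time on the order of (number of beam-search steps) $\times$ (out-degree) $\times$ (distance work), i.e., roughly $O\!\left(\log_\alpha\!\big(\tfrac{\Delta}{(\alpha-1)(c - \frac{\alpha+1}{\alpha-1})}\big) \cdot (4\alpha)^\delta \log\Delta \cdot d\right)$. The key observation is that this bound depends on $m$ only through the aspect ratio $\Delta$ (and $\delta$), and $\Delta$ is a property of $D$, not of the subset — crucially, the aspect ratio of any subset of $D$ is at most the aspect ratio of $D$. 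Hence, treating $C := \log_\alpha\!\big(\tfrac{\Delta}{(\alpha-1)(c - \frac{\alpha+1}{\alpha-1})}\big)(4\alpha)^\delta\log\Delta$ as a constant depending only on $D$ (not on $m$), we are exactly in the second regime of \cref{lem:commonfuncclasses}, namely $A_q(D,m) = O(A_q(D))$ with $A_q(D) = O(Cd)$. Wait — I should be careful: the factor of $d$ (distance work) should be pulled out, so $A_q(D) = O(C d)$ where $C$ is dimensionless; then the second bound of \cref{lem:commonfuncclasses} gives $O(\beta\log_\beta(N)[d + A_q(D)]) = O(\beta\log_\beta(N)[d + Cd])$. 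Absorbing the stray $d$ into $Cd$ (since $C \ge 1$) yields $O(\beta\log_\beta(N)\, C d)$, but to match the stated form I would instead keep it as $O(\beta\log_\beta(N)[d + C\log\Delta\cdots])$ — i.e., just substitute the explicit $C$ back in and present it with the additive $d$ as in the theorem statement.

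Concretely, the steps are: (1) invoke \cref{thm:runtime} to get correctness and the generic bound; (2) cite the \citet{indyk2023worst} query-time bound and state it as $A_q(D,m) = O(Cd)$ with $C$ as above, noting the $m$-independence via the subset aspect-ratio argument; (3) plug into the second case of \cref{lem:commonfuncclasses}; (4) substitute the explicit value of $C$ and simplify to the displayed expression, using $\log \Delta$ and the $\log_\alpha$ factor as written. A minor point to verify along the way is that $c$ must exceed $\tfrac{\alpha+1}{\alpha-1}$ for the \citet{indyk2023worst} bound (and hence ours) to be meaningful — the $\log_\alpha$ argument must be positive — so I would state this as a hypothesis inherited from that work, and note that the $\beta$-WST adds no further restriction on $c$ since each ANNS index is queried as a genuine $c$-ANN index on its subset.

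**The main obstacle** — such as it is — is not conceptual but notational: one must be careful that the Vamana bound is genuinely independent of the subset size $m$, which hinges on the fact that aspect ratio and doubling dimension are monotone (or at least non-increasing) under taking subsets, so that each of the $O(\log_\beta N)$ levels of the tree contributes the same per-query cost rather than a size-dependent one. If instead one used a Vamana bound with explicit $\log m$ or $m^\rho$ dependence, the first branch of \cref{lem:commonfuncclasses} would be needed and the final expression would change; so the crux is simply matching the right regime. Everything else is direct substitution.
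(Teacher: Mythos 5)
Your proposal is correct and follows essentially the same route as the paper: cite the slow-preprocessing Vamana bound from \citet{indyk2023worst} with $\epsilon = c - \frac{\alpha+1}{\alpha-1}$, argue that the aspect ratio and doubling dimension of any subset of $D$ are bounded by those of $D$ so the per-query cost is $m$-independent, and plug into the second case of \cref{lem:commonfuncclasses}. Your identified ``main obstacle'' (subset monotonicity of $\Delta$ and $\delta$) is exactly the point the paper's proof spends its effort on, and your brief wobble over whether the distance cost $d$ multiplies the Vamana term resolves to the stated form, matching the paper.
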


\subsection{Analysis of Super Postfiltering}

\algname{Super Postfiltering} operates on an arbitrary collection of window filtered subsets $D_{(a_i, b_i)}$. A natural question is how to quantify the quality of a particular choice of subsets to index, which motivates the following definition:

\begin{definition}
\label{def:blowup-factor}[Blowup Factor, Cost]
    Given a set of ranges $R$, with $R_i = [a_i, b_i]$ for $a_i, b_i \in \{1, \ldots, N\}$ and $a_i < b_i$, we can define the \textit{blowup factor} of a new query range $[a, b]$ with $a, b \in \{1, \ldots N\}$ as follows:
    $$\mathsf{blowup}(R, [a, b]) = \min_{\substack{[a_i, b_i] \in R_i \\ [a_i, b_i] \supset [a, b] }} \left(\frac{b_i - a_i}{b - a}\right).$$
    Intuitively, the blowup for $[a, b]$ is the ratio between the size of $[a, b]$ and the smallest range in $R$ that contains it. 
    We can further define the worst case blowup for a \textit{set of ranges} (like $R$) by taking the maximum blowup over all possible query ranges $[a, b]$: 
    $$\mathsf{blowup}(R) = \max_{\substack{a, b \in \{1,\ldots,N\}   \\ a < b }} \mathsf{blowup}(R, [a, b]).$$
    We additionally define the \textit{cost} of a set of ranges $R$ as 
    $\mathsf{cost}(R) = \sum_i (b_i - a_i)$. 
\end{definition}

Intuitively, if we build an ANNS index for the points corresponding to each range, then the worst-case blowup limits how expensive a query can be, while the $\mathsf{cost}$ approximates the memory required (since most practical ANNS indices, e.g., LSH~\cite{andoni2008near} and Vamana~\cite{diskann}, have memory that is approximately linear in the number of points that they index). Note that here, a range of, e.g., $[17, 35]$ corresponds to an ANNS index built on the $17$'th point through the $35$'th point in $D$, assuming the points are sorted by label value.

As a quick warmup, we can achieve a worst-case blowup of $N$ and $\mathsf{cost}$ of $N$ by choosing $R = \{[1, N]\}$, and we can get a worst-case blowup of $1$ and $\mathsf{cost}$ of $O(N^3)$ by choosing $R = \{[i, j] \ |\ i, j \in \{1,\ldots,N\}, i < j\}$. We are interested in choices for $R$ that lead to better tradeoffs. 

We can construct an $R$ consisting of the ranges corresponding to each subset indexed in a $\beta$-WST, so we can analyze it using \cref{def:blowup-factor}. We prove the following lemma.

\begin{restatable}{lemma}{wstblowup}
\label{lem:wstblowup}
The ranges corresponding to a $\beta$-WST have worst case blowup factor $B = N / 2 = O(N)$ and  $cost \le N\lceil \log_\beta(N) \rceil = O(N\log_\beta(N))$. 
\end{restatable}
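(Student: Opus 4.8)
The plan is to identify exactly which ranges appear in a $\beta$-WST and then directly bound both quantities. Recall from \cref{alg:build_tree} that the root node indexes $[1,N]$, and a node indexing a range of length $m$ has $\beta$ children indexing consecutive disjoint subranges of length (essentially) $m/\beta$. Since we assume $N$ is a power of $\beta$, at depth $j$ (with the root at depth $0$) there are exactly $\beta^j$ nodes, each indexing an interval of length $N\beta^{-j}$, and these intervals partition $[1,N]$. So the set $R$ of ranges is $R = \{\, [k\,N\beta^{-j}+1,\ (k+1)N\beta^{-j}] : 0 \le j \le \log_\beta N,\ 0 \le k < \beta^j \,\}$, with the deepest level consisting of singletons (or near-singletons) that the algorithm stores directly rather than indexing; for the blowup analysis it is cleanest to note that every query range $[a,b]$ is contained in $[1,N]\in R$, so the min in \cref{def:blowup-factor} is always over a nonempty set.

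For the cost bound, I would sum $\mathsf{cost}(R) = \sum_{j=0}^{\log_\beta N} \beta^j \cdot (N\beta^{-j}) = \sum_{j=0}^{\log_\beta N} N = N(\log_\beta N + 1) \le N\lceil \log_\beta N\rceil$ after accounting for the fact that the leaf level stores points directly rather than as an indexed range (so it contributes no ``index cost''), which brings the count of contributing levels down to exactly $\lceil \log_\beta N\rceil$. This is a routine geometric-series-free summation since each level contributes total length exactly $N$.

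For the blowup bound, the key observation is that a query range $[a,b]$ is only ``served'' cheaply if it sits inside one of these dyadic-style blocks; if $[a,b]$ straddles the midpoint-style boundary between the two halves under the root, then the smallest range in $R$ containing it is all of $[1,N]$. Concretely, take $[a,b]$ to be a range of length $2$ straddling the boundary between the first child and the second child of the root (e.g., with $\beta = 2$, the range $[N/2,\ N/2+1]$): no proper descendant of the root contains it, so the smallest containing range in $R$ is $[1,N]$, giving blowup $N/2$. That establishes $B \ge N/2$. For the matching upper bound $B \le N/2$, I would argue that any query range $[a,b]$ with $b - a \ge 2$ is contained in $[1,N]$, which has length $N-1 < N$; more carefully, to get exactly $N/2$ one uses that the worst case is a length-$2$ range forced up to $[1,N]$ wait—$[1,N]$ has length $N-1$, so the ratio is $(N-1)/2$; to land on the stated $N/2 = O(N)$ I would either absorb the constant (the statement says $B = N/2 = O(N)$, so only the asymptotic form matters) or re-examine whether the intended convention makes range $[1,N]$ have ``length'' $N$. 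I would present the clean argument that $B = \Theta(N)$ with the straddling length-$2$ query as the lower-bound witness and containment in the root range as the upper bound.

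The main obstacle is purely bookkeeping: being careful about the off-by-one conventions (open vs. closed intervals, whether the root range has length $N$ or $N-1$, and how the leaf level is counted), since the lemma states precise constants ($N/2$, $N\lceil\log_\beta N\rceil$). None of the steps require a genuinely hard idea — the content is just recognizing that the WST ranges form a laminar family of $\log_\beta N$ partitions of $[1,N]$ and that a query straddling the top-level split cannot be contained in anything smaller than the whole dataset.
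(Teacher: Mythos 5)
Your proposal is correct and follows essentially the same route as the paper's proof: the blowup lower bound comes from a short query range straddling the boundary between the root's first two children (which no descendant range can contain, forcing the containing range up to $[1,N]$), and the cost bound comes from observing that each level's ranges partition $[1,\ldots,N]$ and so contribute at most $N$, over roughly $\log_\beta N$ levels. The off-by-one/constant issue you flag around whether the exact blowup is $N/2$ or $N-1$ is glossed over in the paper's own proof as well; only the $\Theta(N)$ conclusion is used downstream.
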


Finally, we prove that we can do better than a $\beta$-WST.

\begin{restatable}{theorem}{betterblowup}
\label{thm:betterblowup}
For any $N$ and for any $\gamma > 1$, there exists an $R$ with worst case blowup factor $2\gamma$ that has cost at most $N\left(2\log_\gamma(N) + 1\right)$.
\end{restatable}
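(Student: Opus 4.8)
The plan is to exhibit an explicit hierarchical family of ranges, organized into $\lceil \log_\gamma N \rceil$ levels, such that every query range $[a,b]$ is contained in some range of the family whose length is at most $2\gamma(b-a)$, while the total length summed over the family is controlled level by level. Concretely, on level $k$ (for $k = 0, 1, \ldots$) I would use ranges of a fixed ``scale'' $s_k = \gamma^k$, but placed at \emph{two} interleaved sets of offsets so that any sub-interval whose length is between $s_k/(2\gamma)$ and $s_k$ — roughly a factor-$\gamma$ band of lengths — is guaranteed to sit inside one of the level-$k$ ranges. The standard trick for this is to take, at scale $s = s_k$, the ranges $[1 + j\lceil s/2\rceil,\; 1 + j\lceil s/2\rceil + \lceil s \rceil]$ for all integers $j \ge 0$ with the range still inside $[1,N]$; because consecutive ranges overlap by half their length, any interval of length $\le s/2$ lies entirely within one of them, and any interval of length in $(s/2, s]$ lies within one of length $\le 2\gamma \cdot (\text{its length})$ once we also throw in scale $2s$, etc. Choosing the scales to be powers of $\gamma$ (so $s_k = \gamma^k$, for $k$ up to $\log_\gamma N$) then makes the length bands cover all possible query lengths from $1$ up to $N$, which gives the claimed blowup bound $2\gamma$.

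For the cost bound I would count level by level. At scale $s$, the ranges have length roughly $s$ and are spaced $\Theta(s)$ apart, so there are $O(N/s)$ of them and the total length contributed by that level is $O(N)$ — more precisely, with the offsets spaced by $\lceil s/2 \rceil$ and length $\lceil s\rceil + 1$, the number of ranges is at most $2N/s + O(1)$ and each has length at most $s + O(1)$, giving a per-level cost of at most $2N + O(s)$, which I would clean up to ``at most $2N$ plus lower-order terms'' or simply absorb into the statement's slack. Summing over the $\le \log_\gamma N$ levels (plus one extra level or a constant adjustment to handle the smallest scales and the rounding) yields total cost at most $N(2\log_\gamma N + 1)$, matching the theorem. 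I would be careful that the base level $k=0$ (scale $1$) contributes only $N$ singletons or is handled by simply storing the points directly, and that the top level covers $[1,N]$ itself so that very large query ranges are also served.

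The main obstacle, and the part I would spend the most care on, is making the geometry of the overlapping-ranges argument airtight: I need to show that for \emph{every} integer interval $[a,b] \subseteq [1,N]$ there is a level $k$ with $s_k/2 < b - a \le s_k$ (or $b-a \le 1$, handled separately) \emph{and} that the specific offset lattice at that scale actually contains an interval covering $[a,b]$ with the right length ratio — the half-overlap spacing is exactly what guarantees this, but the rounding from $s_k = \gamma^k$ to integer lengths $\lceil s_k \rceil$ and integer offsets $\lceil s_k/2 \rceil$ needs to be checked so that the blowup stays at $2\gamma$ rather than, say, $2\gamma(1 + o(1))$. A clean way to avoid fighting with rounding is to define the ranges directly in terms of integer lengths $L_k = \lceil \gamma^k \rceil$ and show $L_{k+1} \le 2\gamma L_k$ once $\gamma^k \ge 1$, so the ratio telescopes correctly; I expect this to go through but it is the step most likely to hide an off-by-one. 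A secondary (minor) obstacle is ensuring the level count is genuinely $\lceil \log_\gamma N\rceil$ and not one or two more, which I would handle by letting the $+1$ in the cost bound absorb the boundary levels.
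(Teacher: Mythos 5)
Your construction---geometrically spaced scales $\gamma^k$, with half-overlapping windows of length $\Theta(\gamma^k)$ at each scale so that any query interval fits inside a window at the first scale exceeding its length (giving blowup $2\gamma$ and per-level cost about $2N$, summed over $\log_\gamma N$ levels plus one top-level range)---is exactly the paper's construction: the paper takes windows of length $2m$ spaced $m$ apart at each scale $m=\gamma^j$, which is your half-overlap lattice under the substitution $s=2m$. The approach is correct and essentially identical; the only caveat is that your rounding to $\lceil s\rceil$ and $\lceil s/2\rceil$ introduces per-level slop that may not fit inside the ``$+1$'' slack, which the paper avoids by keeping the window length exactly $2m$ and the spacing exactly $m$ so that the per-level cost is exactly $\lfloor N/m\rfloor\cdot 2m\le 2N$.
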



\section{Experiments}

\myparagraph{Experiment Setup}
We run all query methods on all datasets and filter widths on a 2.20GHz Intel Xeon machine with $40$ cores and two-way hyper-threading, $100$ MiB L3 cache, and $504$ GB of RAM. 
We run index building using all $80$ hyper-threads and restrict queries to $16$ threads, and parallelize across (and not within) queries.  
We run index construction experiments with varying values of $\beta$ on a separate 2.10GHz 
Intel Xeon machine with $96$ cores and two way hyper-threading, $132$ MiB L3 cache, and $1.47$ TB of RAM. We use all hyper-threads on the machine for these experiments.

We use a Vamana index~\cite{diskann} with $\alpha = 1$, $\degree{} = 64$, and the construction beam search width set to $500$ for all ANNS indices, except for the \algname{Milvus} and \algname{VBASE} baselines. Vamana allows searching for $k \ge 1$ nearest neighbors; for simplicity of presentation, we assume that the query beam search width is always set to $k$. We defer a description of Vamana and its associated hyper-parameters to \cref{app:diskann}.

We note that our theory focuses on the $c$-ANN problem, which only concerns whether a single $c$-approximate nearest neighbor is returned. However, as is standard in much of the ANN literature, in our experiments we report the \textit{recall} of the top $10$ filtered neighbors to the query. While our runtime proofs in
\cref{thm:runtime}, \cref{lem:commonfuncclasses}, and \cref{lem:existingannruntime} 
assume that the underlying ANNS algorithm returns a single ANN, in practice, we find that Vamana has high recall for both single ANN and top-$10$ ANN.
Therefore, we believe that our theoretical analyses still provide insights into our empirical results for top-$10$ ANN.

Finally, we ensure that our smallest filter fractions are still wide enough such that there are at least $10$ points that meet the filter constraint, i.e., we ensure that $|D_{(a, b)}| \ge 10$.

\myparagraph{Implementation Details}
We provide an open source C++ library and associated Python bindings.\footnote{\url{https://github.com/JoshEngels/RangeFilteredANN}} 
Our code is built on the ParlayLib library~\cite{blelloch2020parlaylib} for parallel programming and the recent ParlayANN suite of parallel ANNS algorithms~\cite{dobson2023scaling}. 
We implement a number of memory and performance optimizations, including using a larger base case of $1000$ in \cref{alg:build_tree} and storing the entire dataset just once across all sub-indices. 


\textbf{Filter Fraction.} Answering a window filter query that matches almost the entire dataset is a substantially different problem than one that matches almost none of it. Thus, we define the \emp{filter fraction} as a way of quantifying where in this filter regime we are: let a query with filter fraction $2^{i}$ for $i \le 0$ be a query whose window filter matches a $2^i$ fraction of the points in $D$. For example, a query with filter fraction $2^{-3}$ has a window filter $(a, b)$ that matches $\frac{1}{8}$ of the dataset, or in other words $|D_{(a, b)}| = \frac{1}{8} |D|$. Queries with a small filter fraction (e.g., $2^{-15}$) restrict to a small portion of the dataset, queries with a large filter fraction (e.g., $2^{-2}$) restrict to a large portion of the dataset, and queries with a medium filter fraction (e.g., $2^{-8}$) restrict somewhere in between. 
A query with a random filter of fraction $2^{i}$ is a query where we randomly select the filter $(a, b)$ so that all possibilities for the $2^{i} * |D|$ filtered points are equally likely.

\begin{table*}[t]
\footnotesize
    \centering
    \caption{Summary of datasets used in our experiments.}
    \begin{tabular}{|c|c|c|c|c|c|}
        \hline
       Dataset  &  Description & Labels & Num.\ dimensions & Dataset size & Num.\ queries \\
       \hline
        \datasetname{SIFT} & Image feature vectors & Uniform random & $128$ & $1M$ & $10K$\\
        \datasetname{GloVe} & Word embeddings & Uniform random & $100$ & $1.18M$ & $10K$\\
        \datasetname{Deep} & GoogLeNet embeddings & Uniform random & $96$ & $9.9M$ & $10K$\\
        \datasetname{Redcaps} & CLIP image embeddings & Timestamps & $512$ & $11.6M$ & $800$\\
        \datasetname{Adverse} & Mixture of Gaussians & Noisy cluster ID & $100$ & $1M$ & $9.9K$\\
         \hline
    \end{tabular}
    \vspace{0.1cm}
    \label{tab:dataset_sum}
\end{table*}

\myparagraph{Datasets}
The datasets that we use are listed in \cref{tab:dataset_sum} and further explained below. All datasets are available in the same repository as the code; licensing information is included in \cref{app:licensing}.

\begin{list}{\textbullet}{%
    \setlength{\leftmargin}{0.5em}
    \setlength{\itemindent}{0em}
    \setlength{\labelwidth}{\itemindent}
    \setlength{\labelsep}{0.5em}
    \setlength{\listparindent}{1em}
    \setlength{\itemsep}{0em}
    \setlength{\parsep}{0em}
    \setlength{\topsep}{0em}
    \setlength{\partopsep}{0em}
}
    \item \datasetname{SIFT}, \datasetname{GloVe}, and \datasetname{Deep} are ANN datasets from the widely used and standardized ANN benchmarks repository~\cite{aumuller2020ann}. To adapt them to window search, we generate a label for each point uniformly at random between $0$ and $1$. We create $16$ different query sets $Q_1, \ldots, Q_{16}$, each one using the same $10000$ query vectors from ANN benchmarks with random filters of fraction $2^{-i}$. 

    \item \datasetname{Redcaps} is a dataset we created that builds on the RedCaps~\cite{desai2021redcaps} image and caption dataset, which consists of  $11.6M$ Reddit, Imgur, and Flickr images and associated captions. To adapt RedCaps to window search, we use CLIP~\cite{radford2021learning} to generate an embedding for each image and use the timestamp of the image as its label. We create a set of $800$ query vectors by asking ChatGPT-4~\cite{achiam2023gpt} to come up with queries for an image search system, which we then embed using CLIP. See~\cref{app:redcaps_gpt} for full prompt details. We again create $16$ different query sets $Q_1, \ldots, Q_{16}$ using the same $800$ query vectors with random filters of fraction $2^{-i}$. 

    \item \datasetname{Adverse} is a synthetic dataset tailored to disadvantage methods that rely on the label and point distributions being independent. The overall idea is to craft a dataset and queries where the points that meet the filter constraint are much farther away from the query than the rest of the dataset. To do this, we let $D$ be a mixture of $100$ Gaussians and draw $10000$ points from each Gaussian, where the means $\mu_i$ are drawn from $N(0, I)$ and each Gaussian is distributed as $N(\mu_i, 0.01 \cdot I)$ ($I$ is the $100$-dimensional identity matrix). A point generated from the $i$'th Gaussian has a label equal to $i + \mathsf{Uniform}(-0.5, 0.5)$, and we generate a query for every pair $i, j \in \{1,\ldots,100\}$ with $i \ne j$ that consists of a random point drawn from Gaussian $i$ with window filter $(j - 0.5, j + 0.5)$. In other words, each query filters to only points from a different cluster than it itself is drawn from.

\end{list}


\myparagraph{Query Methods and Hyper-parameters}
We run all of the query methods described in \cref{sec:naive_baslines}, \cref{sec:beta_wst}, and \cref{sec:query_methods}. For all methods that use an arbitrary index $A$, we use a Vamana index as described earlier. We run \cref{alg:query} with $\beta = 2$
unless specified otherwise;
we call this method \algname{Vamana WST} in our experiments. We use \algname{Prefiltering} unmodified as described in \cref{sec:naive_baslines}. We expect \algname{Prefiltering} to always achieve near 100\% recall; it may not be 100\% exactly due to numerical precision issues. For \algname{Postfiltering}, \algname{Optimized Postfiltering}, \algname{Three Split}, and \algname{Super Postfiltering}, we search over initial values of $k$ in $[10, 20, 40, 80, 160, 320, 640, 1280]$ and $\mathsf{final\_multiply}$ value in $[1, 2, 3, 4, 8, 16, 32]$. We use the setting $\gamma = 2$  from \cref{thm:betterblowup} for \algname{Super Postfiltering}, which guarantees a worst case blowup factor of $4$ using in practice about $1.5$ times as much memory as \algname{Vamana WST}.

We also compare against \algname{Milvus}~\cite{milvus2022hybridsearch} and \algname{VBASE}~\cite{vbase}, two existing systems that support window search. 

\algname{Milvus} treats window search as an instantiation of categorial filter search. Before querying the underlying ANNS index, \algname{Milvus} creates a bitset that marks all of the points that meet the window filter. Then, while traversing the underlying ANNS index, Milvus ignores points that are not set in the bitset. We try all supported underlying \algname{Milvus} indices: HNSW, IVF\_PQ, IVF\_SQ8, SCANN, and IVF\_FLAT. We search over the same beam sizes as for \algname{Postfiltering}. \algname{Milvus} does not natively support a batch query with different filters for each point in the batch, so we wrote a Python multiprocessing program that spins up many processes that query the constructed index in parallel.

In addition to the heuristic 
described in \cref{sec:related}, \algname{VBASE} uses a query planning step to attempt to predict how many initial results to retrieve, then filters these retrieved results, before finally applying a final top-$k$ ranking to the retrieved results that meet the filter constraint. We were not able to get multiple queries to run in parallel with VBASE (and in the original paper they also only operate in the regime of a single query at a time).

\begin{figure*}[t]
\begin{center}
\centerline{\includegraphics[width=16cm]{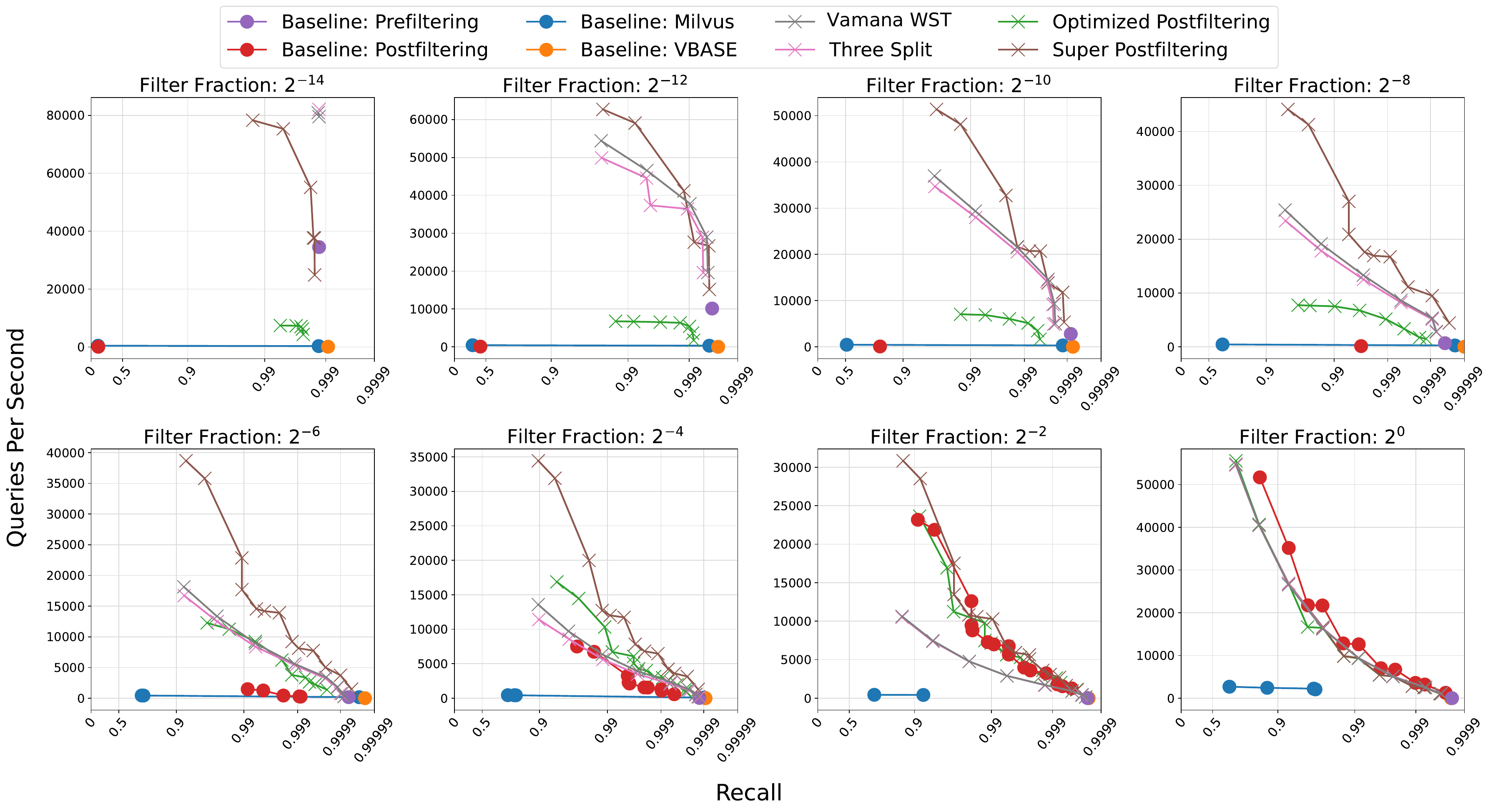}}
\caption{Comparison of Pareto frontiers of all methods on window search with different filter fractions on \datasetname{Deep} using 16 threads. Up and to the right is better. On medium filter fraction settings, our methods achieve orders of magnitude more queries per second than the baselines at the same recall levels. Points along the Pareto frontier are denoted by circles for baseline methods and X's for our methods.
}
\label{fig:deep_results}
\end{center}
\end{figure*}

\myparagraph{Tradeoff of Queries per Second vs.\ Recall}
We plot the Pareto frontier of recall vs.\ queries per second of all methods on a selection of filter fractions on \datasetname{Deep} in~\cref{fig:deep_results} (the indices we compute the frontier on correspond to the hyper-parameters specified in the last section). 
We include plots of other datasets in \cref{sec:experiments-appendix}; the main observations are substantially the same across datasets. Overall, our methods achieve multiple orders of magnitude query speedup at fixed recall levels. \algname{Super Postfiltering} does particularly well at lower recall levels of around $0.9$ to $0.99$, while at high recall levels \algname{Vamana WST} and \algname{Three Split} are the best methods, attaining about the same performance. Additionally, for small filter fractions the \algname{Prefiltering} baseline is competitive with our methods, whereas for large filter fractions the \algname{Postfiltering} baseline is competitive. Our methods achieve the largest speedups over baselines in the medium filter fractions.
We note that among our methods, \algname{Optimized Postfiltering} performs the worst, which we explain with the high worst case blowup of a $2$-WST (see~\cref{lem:wstblowup}). Finally, we note that the two vector databases that we test against, \algname{VBASE} and \algname{Milvus}, are completely dominated by our naive baselines \algname{Prefiltering} and \algname{Postfiltering}.

\begin{figure}[t]
\centerline{\includegraphics[width=0.9\columnwidth]{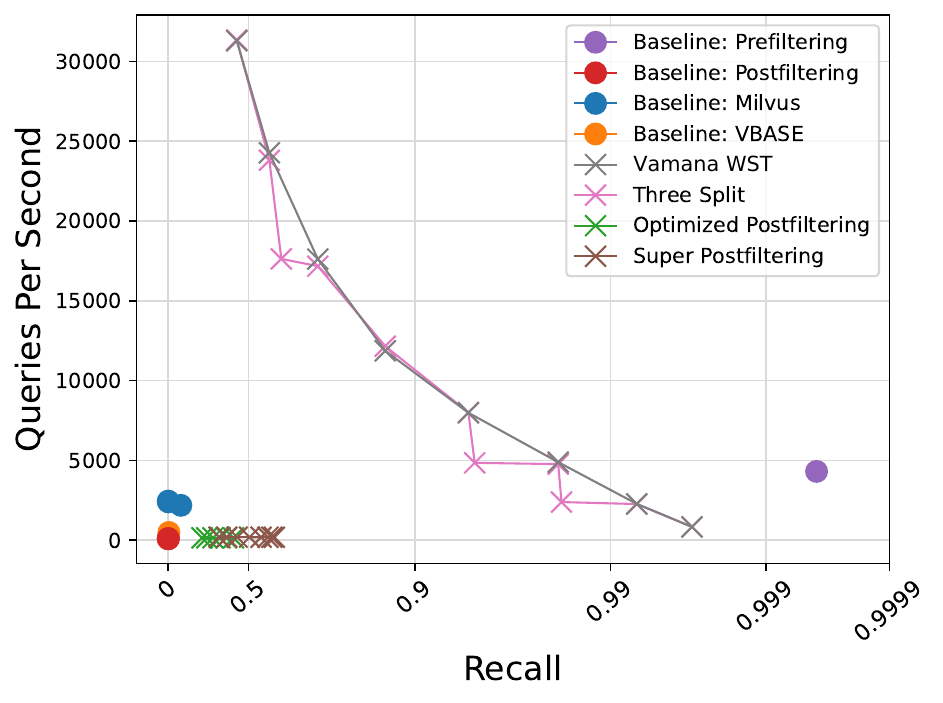}}
\caption{Comparison of window search methods on \datasetname{Adverse}. Up and to the right is better. \algname{Vamana WST} and \algname{Three Split} achieve a good recall vs.\ latency tradeoff, but besides the \algname{Prefiltering} baseline all of the other methods are unable to achieve a reasonable recall. All methods are run with $16$ threads. 
}
\label{fig:adversarial_results}
\end{figure}

We also plot results on \datasetname{Adverse} in~\cref{fig:adversarial_results}.
\algname{Prefiltering} does well, as expected. Furthermore, \algname{Vamana WST} offers a good tradeoff between recall and query latency, which makes sense because the query time guarantee from~\cref{lem:existingannruntime} (assuming Vamana also does well for top-$10$ ANN) holds for any query distribution, even an adversarial one.
However, all of the methods that rely solely on postfiltering, as well as the other baselines, fail to achieve meaningful recall. For the postfiltering methods, this result is unsurprising: the index that gets selected for postfiltering has many points that do not meet the query's filter constraints, and by construction these points are frequently closer to the query than the entire target cluster. The beam search then has to expand over many near neighbors to reach points meeting the filter constraint, at which point the accuracy of the beam search is likely low. Surprisingly, although \algname{Three Split} does use postfiltering as a subroutine, it is able to achieve similar performance to \algname{Vamana WST}; this may be because after querying for the indices that make up the center of the label range (which is \textit{not} postfiltering), the label ranges that are left on each side are typically much smaller. 

\begin{table*}[t]
\footnotesize
\caption{Speedup of our best method over the best baseline, restricted to hyper-parameter settings that yield at least $0.95$ recall. All methods are run with $16$ threads.  We show a speedup across all filter fractions smaller than $2^{-1}$ on all datasets. We show up to a $75\times$ speedup on medium filter fractions.}
\centering
\begin{tabular}{|l|rrrrrrrrrrrr|}
\toprule
Dataset & $2^{-11}$ & $2^{-10}$ & $2^{-9}$ & $2^{-8}$ & $2^{-7}$ & $2^{-6}$ & $2^{-5}$ & $2^{-4}$ & $2^{-3}$ & $2^{-2}$ & $2^{-1}$ & $2^{0}$ \\
\hline
\datasetname{Deep} & 10.49 & 18.46 & 35.65 & 61.21 & 77.55 & 24.28 & 9.35 & 2.67 & 1.46 & 1.39 & 0.75 & 0.77 \\
\datasetname{SIFT} & 1.35 & 1.88 & 3.05 & 4.87 & 8.68 & 16.51 & 11.26 & 4.46 & 2.26 & 1.28 & 0.90 & 0.92 \\
\datasetname{GloVe} & 1.90 & 2.27 & 2.70 & 3.77 & 5.02 & 6.19 & 9.60 & 7.62 & 2.34 & 1.52 & 0.92 & 0.92 \\
\datasetname{Redcaps} & 2.31 & 3.33 & 5.47 & 7.78 & 10.07 & 17.22 & 3.94 & 3.64 & 1.75 & 1.73 & 0.90 & 0.90 \\
\hline
\end{tabular}
\vspace{0.1cm}
\label{tab:speedups}
\end{table*}

We also include speedups of our best method (the best of \algname{Vamana WST}, \algname{Optimized Postfiltering}, \algname{Super Postfiltering}, and \algname{Three Split}) over the best baseline method on filter fractions $i = 2^0, \ldots, 2^{-11}$ on all datasets in~\cref{tab:speedups} at a recall level of $0.95$, and we include the same table at additional recall levels in \cref{sec:experiments-appendix}. These reinforce our findings in~\cref{fig:deep_results} across other datasets; at a $0.95$ recall level, we achieve up to a $75$X speedup on \datasetname{Deep}, up to a $16$X speedup on \datasetname{SIFT}, up to a $9$X speedup on \datasetname{GloVe}, and up to a $17$X speedup on \datasetname{Redcaps}.

\begin{table}[t]
\footnotesize
\caption{Build times for different indexing methods across all datasets, rounded to the nearest unit. Index construction was done using $80$ threads.}
\vspace{0.1cm}
\centering
\begin{tabular}{|l|r|r|r|r|}
\hline
\textbf{Dataset} & \textbf{Vamana} & \textbf{$2$-WST} & \textbf{Super} \\
\hline
\datasetname{SIFT}  & 1 m & 8 m & 14 m\\
\datasetname{GloVe}  & 3 m & 17 m & 28 m \\
\datasetname{Deep}  & 17 m & 2 h & 4 h  \\
\datasetname{Redcaps} & 2 h & 7 h & 19 h \\
\datasetname{Adverse} & 3 m & 23 m & 41 m \\
\hline
\end{tabular}
\vspace{0.1cm}
\label{table:build-times-datasets}
\end{table}

\begin{table}[t]
\centering

\caption{Index sizes for different indexing methods on all datasets, rounded to the nearest $10$th of a GB. Note that prefiltering takes just the memory of the original dataset. The "Raw" column is the size of just the dataset.}
\footnotesize
\begin{tabular}{|l|r|r|r|r|r|}
\hline
\textbf{Dataset} & \textbf{Raw} & \textbf{Vamana} & \textbf{$2$-WST} & \textbf{Super} \\
\hline
\datasetname{SIFT}  & 0.5 GB & 1.0 GB &  4.7 GB & 7.6 GB\\
\datasetname{GloVe}  & 0.5 GB & 1.1 GB & 5.6 GB & 9.5 GB \\
\datasetname{Deep} & 3.6 GB & 6.8 GB &  53.2 GB&  94.6 GB\\
\datasetname{Redcaps} & 23 GB & 27.1 GB & 79.2 GB & 127 GB \\
\datasetname{Adverse} & 0.8 GB & 0.9 GB & 4.6 GB & 7.4 GB \\
\hline
\end{tabular}
\vspace{0.1cm}
\label{table:index-sizes-datasets}
\end{table}

\myparagraph{Index Memory and Construction Time}
\cref{table:build-times-datasets} and \cref{table:index-sizes-datasets} show the construction time and memory sizes for a single Vamana index (for \algname{Postfiltering}), a $2$-WST with Vamana indices at each node (for \algname{Vamana WST}, \algname{Optimized Postfiltering}, and \algname{Three Split}), and the index for \algname{Super Postfiltering}. We also show the memory for just the dataset, which is exactly how much memory \algname{Prefiltering} needs (the build for \algname{Prefiltering} is just a sort and takes less than a few seconds across all datasets). We see that the $2$-WST takes about $3$--$8$ times as much memory as a single Vamana index, depending on the dataset, whereas the index for \algname{Super Postfiltering} takes about $5$--$14$ times as much memory as a single Vamana index. The construction times show a similar increase across methods, with a $5$--$10$X increase in build time from Vamana to $2$-WST and a $10$--$20$X increase from Vamana to \algname{Super Postfiltering}. Finally, we note that while the larger datasets do take signficantly longer to build, we are using a large beam search construction buffer of $500$ in order to vary as few hyper-parameters as possible, and a smaller buffer size may give faster build times with minimal loss in recall.

\begin{figure}
    \centering
    \includegraphics[width=0.7\linewidth]{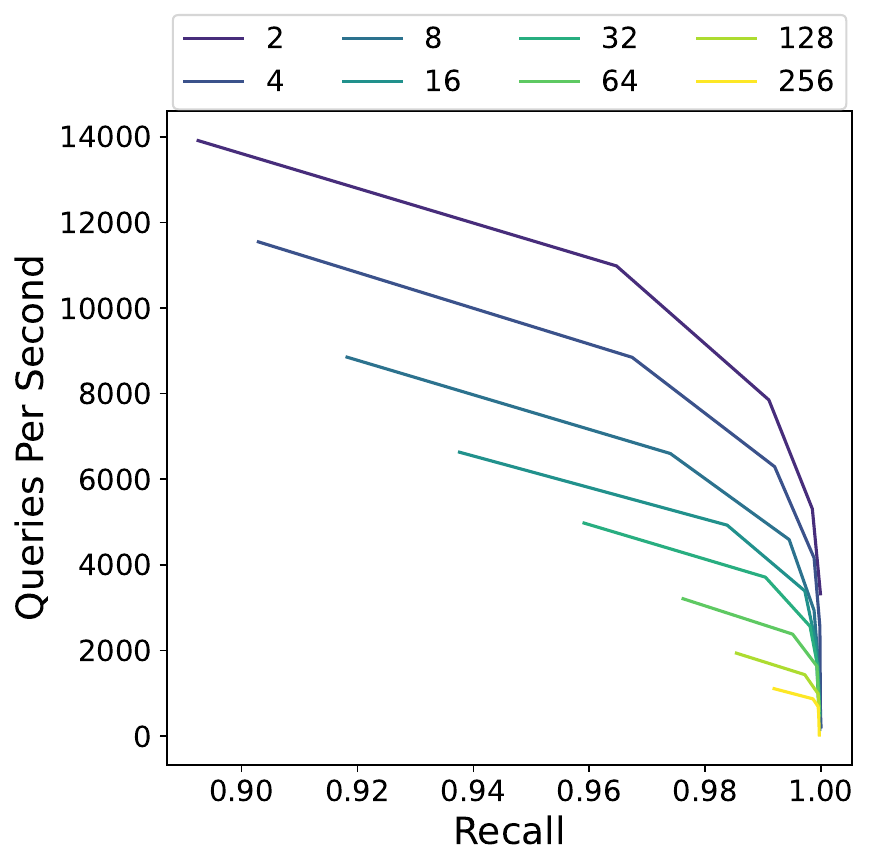}
    \caption{Pareto curves of recall vs.\ throughput on the SIFT dataset for a filter fraction of $2^{-1}$ and varying branching factors $\beta$ for \algname{Vamana WST}. Up and to the right is better. All trials are run with $16$ threads.}
    \label{fig:half_branching_filter_widths}
\end{figure}

\myparagraph{Varying $\beta$}
A larger branching factor $\beta$ decreases the build time and memory footprint of a \algname{Vamana WST} index by reducing the number of levels in the tree and thus reducing the number of graphs that are built (see~\cref{fig:branching_buildtime_memory} in \cref{sec:experiments-appendix} for a plot showing the exact reduction in memory and build time as we  scale $\beta$). However, as shown in~\cref{fig:half_branching_filter_widths}, this comes at the cost of a reduction in recall and latency. These results are substantially the same across different filter fractions; see~\cref{fig:full_branching_filter_widths} in \cref{sec:experiments-appendix} for experiments with more filter fractions.




\section{Conclusion}
We identify window search as an important and overlooked search problem. The methods we present for solving window search give a significant speedup over the state of the art, have strong theoretical guarantees, and are an important step towards ensuring vector databases efficiently support a full range of necessary embedding search operations.


\section*{Impact Statement}

We expect the primary broader impact of our work to be general improvement of semantic vector search. Thus, we do not expect our work to have societal consequences beyond generally improving the accuracy and performance of machine learning systems. These systems have many potential societal consequences, none of which we feel must be specifically highlighted here.

\section*{Acknowledgements}
 This research is supported by
DOE Early Career Award \#DE-SC0018947,
NSF CAREER Award \#CCF-1845763, NSF Award \#CCF-2316235, NSF Award \#CNS-2317194,
Google Faculty Research Award, Google Research Scholar Award, and
FinTech@CSAIL Initiative.

\bibliography{paper}

\begin{thebibliography}{37}
\providecommand{\natexlab}[1]{#1}
\providecommand{\url}[1]{\texttt{#1}}
\expandafter\ifx\csname urlstyle\endcsname\relax
  \providecommand{\doi}[1]{doi: #1}\else
  \providecommand{\doi}{doi: \begingroup \urlstyle{rm}\Url}\fi

\bibitem[mil(2022)]{milvus2022hybridsearch}
Milvus-docs: Conduct a hybrid search, 2022.
\newblock URL \url{https://github.com/milvus-io/milvus-docs/blob/v2.1.x/site/en/userGuide/search/hybridsearch.md}.

\bibitem[vea(2022)]{vearch2022search}
Vearch doc operation: Search, 2022.
\newblock URL \url{https://vearch.readthedocs.io/en/latest/use_op/op_doc.html?highlight=filter#search}.

\bibitem[ves(2022)]{vespa2022semistructured}
Vespa use cases: Semi-structured navigation, 2022.
\newblock URL \url{https://docs.vespa.ai/en/attributes.html}.

\bibitem[wea(2022)]{weaviate2022filters}
Weaviate documentation: Filters, 2022.
\newblock URL \url{https://weaviate.io/developers/weaviate/current/graphql-references/filters.html}.

\bibitem[Achiam et~al.(2023)Achiam, Adler, Agarwal, Ahmad, Akkaya, Aleman, Almeida, Altenschmidt, Altman, Anadkat, et~al.]{achiam2023gpt}
Achiam, J., Adler, S., Agarwal, S., Ahmad, L., Akkaya, I., Aleman, F.~L., Almeida, D., Altenschmidt, J., Altman, S., Anadkat, S., et~al.
\newblock {GPT}-4 technical report.
\newblock \emph{arXiv preprint arXiv:2303.08774}, 2023.

\bibitem[Andoni \& Indyk(2008)Andoni and Indyk]{andoni2008near}
Andoni, A. and Indyk, P.
\newblock Near-optimal hashing algorithms for approximate nearest neighbor in high dimensions.
\newblock \emph{Communications of the ACM}, 51\penalty0 (1):\penalty0 117--122, 2008.

\bibitem[Arya \& Mount(1993)Arya and Mount]{arya1993approximate}
Arya, S. and Mount, D.~M.
\newblock Approximate nearest neighbor queries in fixed dimensions.
\newblock In \emph{ACM-SIAM Symposium on Discrete Algorithms}, pp.\  271--280, 1993.

\bibitem[Aum{\"u}ller et~al.(2020)Aum{\"u}ller, Bernhardsson, and Faithfull]{aumuller2020ann}
Aum{\"u}ller, M., Bernhardsson, E., and Faithfull, A.
\newblock {ANN}-benchmarks: A benchmarking tool for approximate nearest neighbor algorithms.
\newblock \emph{Information Systems}, 87:\penalty0 101374, 2020.

\bibitem[Bentley(1977)]{bentley1977algorithms}
Bentley, J.~L.
\newblock Algorithms for {Klee}’s rectangle problems.
\newblock Technical report, Technical Report, 1977.

\bibitem[Blelloch et~al.(2020)Blelloch, Anderson, and Dhulipala]{blelloch2020parlaylib}
Blelloch, G.~E., Anderson, D., and Dhulipala, L.
\newblock {ParlayLib}--a toolkit for parallel algorithms on shared-memory multicore machines.
\newblock In \emph{ACM Symposium on Parallelism in Algorithms and Architectures}, pp.\  507--509, 2020.

\bibitem[Chen et~al.(2020)Chen, Hu, Fan, Shen, Zhang, Liu, Du, Li, Chen, and Li]{chen2020fast}
Chen, Y., Hu, X., Fan, W., Shen, L., Zhang, Z., Liu, X., Du, J., Li, H., Chen, Y., and Li, H.
\newblock Fast density peak clustering for large scale data based on {kNN}.
\newblock \emph{Knowledge-Based Systems}, 187:\penalty0 104824, 2020.

\bibitem[Clarkson(1997)]{clarkson1997nearest}
Clarkson, K.~L.
\newblock Nearest neighbor queries in metric spaces.
\newblock In \emph{ACM Symposium on Theory of Computing}, pp.\  609--617, 1997.

\bibitem[Desai et~al.(2021)Desai, Kaul, Aysola, and Johnson]{desai2021redcaps}
Desai, K., Kaul, G., Aysola, Z., and Johnson, J.
\newblock {RedCaps}: Web-curated image-text data created by the people, for the people.
\newblock In \emph{Advances in Neural Information Processing Systems}, 2021.

\bibitem[Douze et~al.(2024)Douze, Guzhva, Deng, Johnson, Szilvasy, Mazaré, Lomeli, Hosseini, and Jégou]{douze2024faiss}
Douze, M., Guzhva, A., Deng, C., Johnson, J., Szilvasy, G., Mazaré, P.-E., Lomeli, M., Hosseini, L., and Jégou, H.
\newblock The {Faiss} library.
\newblock \emph{arXiv e-prints}, 2024.

\bibitem[Fenwick(1994)]{fenwick1994new}
Fenwick, P.~M.
\newblock A new data structure for cumulative frequency tables.
\newblock \emph{Software: Practice and Experience}, 24\penalty0 (3):\penalty0 327--336, 1994.

\bibitem[Gollapudi et~al.(2023)Gollapudi, Karia, Sivashankar, Krishnaswamy, Begwani, Raz, Lin, Zhang, Mahapatro, Srinivasan, Singh, and Simhadri]{filtereddiskANN}
Gollapudi, S., Karia, N., Sivashankar, V., Krishnaswamy, R., Begwani, N., Raz, S., Lin, Y., Zhang, Y., Mahapatro, N., Srinivasan, P., Singh, A., and Simhadri, H.~V.
\newblock Filtered-{DiskANN}: Graph algorithms for approximate nearest neighbor search with filters.
\newblock In \emph{ACM Web Conference}, pp.\  3406–3416, 2023.

\bibitem[Gupta et~al.(2023)Gupta, Yi, Coleman, Luo, Lakshman, and Shrivastava]{gupta2023caps}
Gupta, G., Yi, J., Coleman, B., Luo, C., Lakshman, V., and Shrivastava, A.
\newblock {CAPS}: A practical partition index for filtered similarity search, 2023.

\bibitem[Huang et~al.(2023)Huang, Yu, and Shun]{huang2023faster}
Huang, Y., Yu, S., and Shun, J.
\newblock Faster parallel exact density peaks clustering.
\newblock In \emph{SIAM Conference on Applied and Computational Discrete Algorithms (ACDA23)}, pp.\  49--62. SIAM, 2023.

\bibitem[Ilyas et~al.(2008)Ilyas, Beskales, and Soliman]{ilyas2008topk}
Ilyas, I.~F., Beskales, G., and Soliman, M.~A.
\newblock A survey of top-k query processing techniques in relational database systems.
\newblock \emph{ACM Comput. Surv.}, 40\penalty0 (4), oct 2008.

\bibitem[Indyk \& Xu(2023)Indyk and Xu]{indyk2023worst}
Indyk, P. and Xu, H.
\newblock Worst-case performance of popular approximate nearest neighbor search implementations: Guarantees and limitations.
\newblock In \emph{Advances in Neural Information Processing Systems}, 2023.

\bibitem[Jayaram~Subramanya et~al.(2019)Jayaram~Subramanya, Devvrit, Simhadri, Krishnawamy, and Kadekodi]{diskann}
Jayaram~Subramanya, S., Devvrit, F., Simhadri, H.~V., Krishnawamy, R., and Kadekodi, R.
\newblock {DiskANN}: Fast accurate billion-point nearest neighbor search on a single node.
\newblock In \emph{Advances in Neural Information Processing Systems}, 2019.

\bibitem[Kane(2024)]{pgvector}
Kane, A.
\newblock pgvector: Open-source vector similarity search for postgres.
\newblock \url{https://github.com/pgvector/pgvector}, 2024.
\newblock Accessed: 2024-05-24.

\bibitem[Krauthgamer \& Lee(2004)Krauthgamer and Lee]{krauthgamer2004navigating}
Krauthgamer, R. and Lee, J.~R.
\newblock Navigating nets: simple algorithms for proximity search.
\newblock In \emph{ACM-SIAM Symposium on Discrete Algorithms}, pp.\  798--807, 2004.

\bibitem[Kurc et~al.(1999)Kurc, Chang, Ferreira, Sussman, and Saltz]{kurc1999querying}
Kurc, T., Chang, C., Ferreira, R., Sussman, A., and Saltz, J.
\newblock Querying very large multi-dimensional datasets in {ADR}.
\newblock In \emph{ACM/IEEE Conference on Supercomputing}, 1999.

\bibitem[Malkov \& Yashunin(2018)Malkov and Yashunin]{malkov2018efficient}
Malkov, Y.~A. and Yashunin, D.~A.
\newblock Efficient and robust approximate nearest neighbor search using hierarchical navigable small world graphs.
\newblock \emph{IEEE Transactions on Pattern Analysis and Machine Intelligence}, 42\penalty0 (4):\penalty0 824--836, 2018.

\bibitem[Manohar et~al.(2024)Manohar, Shen, Blelloch, Dhulipala, Gu, Simhadri, and Sun]{dobson2023scaling}
Manohar, M.~D., Shen, Z., Blelloch, G., Dhulipala, L., Gu, Y., Simhadri, H.~V., and Sun, Y.
\newblock {ParlayANN}: Scalable and deterministic parallel graph-based approximate nearest neighbor search algorithms.
\newblock In \emph{ACM SIGPLAN Annual Symposium on Principles and Practice of Parallel Programming}, pp.\  270–285, 2024.

\bibitem[Mohoney et~al.(2023)Mohoney, Pacaci, Chowdhury, Mousavi, Ilyas, Minhas, Pound, and Rekatsinas]{mohoney2023high}
Mohoney, J., Pacaci, A., Chowdhury, S.~R., Mousavi, A., Ilyas, I.~F., Minhas, U.~F., Pound, J., and Rekatsinas, T.
\newblock High-throughput vector similarity search in knowledge graphs.
\newblock \emph{Proceedings of the ACM on Management of Data}, 1\penalty0 (2):\penalty0 1--25, 2023.

\bibitem[Peng et~al.(2023)Peng, Galley, He, Cheng, Xie, Hu, Huang, Liden, Yu, Chen, et~al.]{peng2023check}
Peng, B., Galley, M., He, P., Cheng, H., Xie, Y., Hu, Y., Huang, Q., Liden, L., Yu, Z., Chen, W., et~al.
\newblock Check your facts and try again: Improving large language models with external knowledge and automated feedback.
\newblock \emph{arXiv preprint arXiv:2302.12813}, 2023.

\bibitem[Pibiri \& Venturini(2021)Pibiri and Venturini]{pibiri2021practical}
Pibiri, G.~E. and Venturini, R.
\newblock Practical trade-offs for the prefix-sum problem.
\newblock \emph{Software: Practice and Experience}, 51\penalty0 (5):\penalty0 921--949, 2021.

\bibitem[{Pinecone Systems, Inc.}(2024)]{pinecone2024overview}
{Pinecone Systems, Inc.}
\newblock Overview, 2024.
\newblock URL \url{https://docs.pinecone.io/docs/overview}.

\bibitem[Prokhorenkova \& Shekhovtsov(2020)Prokhorenkova and Shekhovtsov]{prokhorenkova2020graph}
Prokhorenkova, L. and Shekhovtsov, A.
\newblock Graph-based nearest neighbor search: From practice to theory.
\newblock In \emph{International Conference on Machine Learning}, pp.\  7803--7813, 2020.

\bibitem[Qader et~al.(2018)Qader, Cheng, and Hristidis]{qader2018comparative}
Qader, M., Cheng, S., and Hristidis, V.
\newblock A comparative study of secondary indexing techniques in {LSM}-based {NoSQL} databases.
\newblock In \emph{International Conference on Management of Data}, pp.\  551--566, 2018.

\bibitem[Radford et~al.(2021)Radford, Kim, Hallacy, Ramesh, Goh, Agarwal, Sastry, Askell, Mishkin, Clark, et~al.]{radford2021learning}
Radford, A., Kim, J.~W., Hallacy, C., Ramesh, A., Goh, G., Agarwal, S., Sastry, G., Askell, A., Mishkin, P., Clark, J., et~al.
\newblock Learning transferable visual models from natural language supervision.
\newblock In \emph{International Conference on Machine Learning}, pp.\  8748--8763, 2021.

\bibitem[Rubinstein(2018)]{rubinstein2018hardness}
Rubinstein, A.
\newblock Hardness of approximate nearest neighbor search.
\newblock In \emph{ACM SIGACT Symposium on Theory of Computing}, pp.\  1260--1268, 2018.

\bibitem[Simhadri et~al.(2023)Simhadri, Aumüller, Baranchuk, Douze, Ingber, Liberty, and Williams]{neurips23bigann}
Simhadri, H., Aumüller, M., Baranchuk, D., Douze, M., Ingber, A., Liberty, E., and Williams, G.
\newblock {NeurIPS}'23 competition track: Big-{ANN}, 2023.
\newblock URL \url{https://big-ann-benchmarks.com/neurips23.html}.

\bibitem[Yu et~al.(2023)Yu, Engels, Huang, and Shun]{yu2023pecann}
Yu, S., Engels, J., Huang, Y., and Shun, J.
\newblock Pecann: Parallel efficient clustering with graph-based approximate nearest neighbor search.
\newblock \emph{arXiv preprint arXiv:2312.03940}, 2023.

\bibitem[Zhang et~al.(2023)Zhang, Xu, Chen, Sui, Xie, Cai, Chen, He, Yang, Yang, Yang, and Zhou]{vbase}
Zhang, Q., Xu, S., Chen, Q., Sui, G., Xie, J., Cai, Z., Chen, Y., He, Y., Yang, Y., Yang, F., Yang, M., and Zhou, L.
\newblock {VBASE}: Unifying online vector similarity search and relational queries via relaxed monotonicity.
\newblock In \emph{USENIX Symposium on Operating Systems Design and Implementation (OSDI)}, pp.\  377--395, 2023.

\end{thebibliography}
\bibliographystyle{icml2024}


\newpage
\appendix
\onecolumn

\section{Proofs}\label{sec:proofs}

\existingannmemory
\begin{proof}

As stated in the main text, if we have some function parameterized by the dataset and subset size $O(A_f(D, m))$, then this function evaluated on all nodes of \cref{alg:build_tree} is

\begin{align}
\label{eqn:helpful}
O \left( \sum_{j = 0}^{\log_\beta N} \beta^j A_f(D, N \cdot \beta^{-j})\right).    
\end{align}

If $A_f$ is of the form $Cm^\rho$ for $\rho \ge 1$ and for some constant $C$ depending on $D$, then this is equivalent to:
\begin{align*}
    &=  O \left( CN^\rho \sum_{j = 0}^{\log_\beta N} \beta^{j - j\rho} \right)\\
    &= O \left( CN^\rho \sum_{j = 0}^{\log_\beta N} (\beta^{1 - \rho})^j \right)\\
    &= \begin{cases} 
    O(CN^\rho \log_{\beta}N) &\text{if $\rho = 1$}\\
    O \left( CN^\rho \frac{1 - N^{1 - \rho}}{1 - \beta^{1 - \rho}} \right) = O((\frac{1}{1 - \beta^{1 - \rho}} ) CN^\rho) &\text{if $\rho > 1$}
        \end{cases}
\end{align*}

Determining the memory of the index returned by \cref{alg:build_tree} is equivalent to determining the memory of all nodes built on Line~\ref{alg:build_tree:line:get_index} throughout the recursion. Similarly, as long as these calls take longer than constant time, they are the computational bottleneck of the recursion. Thus, we can plug the Vamana build time and memory from~\cite{indyk2023worst} into these results to get the build time and memory of a Vamana WST.

The slow preprocessing version of Vamana takes $O(N^3)$ for construction time and takes up $O(N (4\alpha)^\delta \log \Delta)$ space, so plugging into these results we have that a $\beta$-WST tree with a slow preprocessing Vamana implementation takes $O(\frac{1}{1 - \beta^{-2}} N^3) = O(N^3)$ time to build and has memory size $O((4\alpha)^\delta \log (\Delta) N \log_{\beta} N)$.    
\end{proof}

\runtime*
\begin{proof}

First, we will show that \cref{alg:query} solves the $c$-approximate window search problem. Then, we will show that it solves it in the given running time.

\subsubsection*{\cref{alg:query} solves the $c$-approximate window search problem}

First, we establish correctness and completeness of the evaluated points, i.e., that every point returned has a valid label, and that all points that meet the valid label are "evaluated" on either Line~\ref{alg:query:line:argmin1} or Line~\ref{alg:query:line:ann}.

For correctness, note that if a point is returned on Line~\ref{alg:query:line:argmin1}, by \cref{def:filtered_dataset} it has a label value in $(a, b)$. Similarly, since a point returned on Line~\ref{alg:query:line:ann} is in $D$ and by the if statement we know that all points in $D$ have a label in $(a, b)$, a point returned on Line~\ref{alg:query:line:ann} has a label value in $(a, b)$. Any point returned by Line~\ref{alg:query:return} is an $\argmin$ over points returned in one of these two cases, so we are guaranteed that the overall point $y$ returned has $\ell(y) \in (a, b)$.

For completeness, first consider some call to \cref{alg:query} with $T = (index, (children, sizes), S)$. 
By assumption, $N$ is a power of $\beta$, so we will proceed inductively over $|S|$ equal to powers of $\beta$. Let us first consider any $S$ such that $|S| = 1$. If $x \in S_{(a, b)}$, then it will be evaluated on Line~\ref{alg:query:line:argmin1}. Now we assume that for $|S| = \beta^n$, if $x \in S$, then a call to $Query$ with the tree corresponding to $S$ will evaluate $x$. For all sets $S$ of size $\beta^{n + 1}$ that contain some $x$, if Line~\ref{alg:query:line:argmin1} or Line~\ref{alg:query:line:ann} is executed, then we evaluate $x$. Otherwise, by construction (Line~\ref{alg:build_tree:line:build_tree_recursive} in \cref{alg:build_tree}) the children subsets $S_i$ completely partition $S$, so $x$ is in some $S_i$ with $|S_i| = \beta^{n}$, and so by our inductive hypothesis $x$ will be evaluated when we call query on $children[i]$.

We now show that a $c$-approximate window-filtered nearest neighbor is returned for some (possible recursive) call to $Query$. Because of our correctness guarantee, at some point $q^*$ will be evaluated on Line~\ref{alg:query:line:argmin1} or Line~\ref{alg:query:line:ann}. If $q^*$ is evaluated on Line~\ref{alg:query:line:argmin1}, then because $q^*$ is the closest point to $q$ in all of $D_{(a, b)}$, it will also be the closest point to $q$ in $S_{(a, b)} \subset D_{(a, b)}$, so it will get returned by the $\argmin$ (and $q^*$ is trivially a $c$-approximate window filtered nearest neighbor). If $q^*$ is evaluated on Line~\ref{alg:query:line:ann}, then by the guarantee of the $c$-ANN algorithm $A$, some point $y$ will be returned that is a $c$-ANN to $q$ on $S_{(a, b)}$. Because $q^*$ is also in $S_{(a, b)}$, this implies that $\texttt{dist}_V(q, y) \le c \cdot \texttt{dist}_V(q, q^*)$, so $y$ is also a $c$-approximate window filtered nearest neighbor.

Finally, we show that if any instance of a call to $Query$ finds a $c$-approximate window filtered nearest neighbor, then the overall algorithm will return a $c$-approximate window filtered nearest neighbor. Consider the case that a valid $c$-approximate window filtered nearest neighbor $y$ is returned by Line~\ref{alg:query:line:argmin1} or Line~\ref{alg:query:line:ann}. If this is not a top-level call to $Query$, then $Query$ was called on Line~\ref{alg:query:line:recurse}, so the point $y'$ that gets returned will also be evaluated in the $\argmin$ on Line~\ref{alg:query:return}, and a point $y'$ will be returned from Line~\ref{alg:query:return} that is in $D_{(a, b)}$ (by our correctness result) and has $d(q, y') \le d(q, y) \le c \cdot d(q, q^*)$. Thus by transitivity, $y'$ is also a $c$-approximate window filtered nearest neighbor for $q$, and inductively the point $y''$ that gets returned by the original top-level $Query$ call will be a $c$-approximate window filtered nearest neighbor. 

\subsubsection*{Algorithm $2$ running time}

We will examine each level of the tree built by \cref{alg:build_tree} as \cref{alg:query} traverses it, i.e., the nodes with $|S| = N$, $|S| = N / \beta$, $|S| = N / \beta^2, \ldots, |S| = \beta, |S| = 1$ (the nodes with $|S| = 1$ are just the $index = NULL$ case). 


At a high level, this analysis is similar to $B$-ary segment or Fenwick trees~\cite{pibiri2021practical}, which have $O(\beta \log_\beta N)$ query time and query at most $O(\beta)$ indices per level. The overall idea for our analysis is to show that \cref{alg:query} will run an ANN search on at most $2\beta - 2$ indices per level.

First, we note that our algorithm has a ``one time evaluation guarantee": if we execute an ANN search (Line~\ref{alg:query:line:ann}) or an exact search (Line~\ref{alg:query:line:argmin1}) on some subset $S$, then we did not execute an ANN search or exact search on any parent of $S$ (since then we never would have reached it recursively), so every point in $S$ (and therefore every point in $D_{(a,b)}$) is evaluated just once. 


Now consider the largest $j$ such that there exists some set $S$ in the tree of size $\beta^j$ such that $S \subset D_{(a, b)}$. In other words, $S$ is the largest set that we built an index for and that entirely consists of points within the filtered dataset corresponding to the query. There may be multiple sets of size $\beta^j$ within $D_{(a, b)}$.

Let all sets of size $\beta^j$ be ordered such that each set's labels are strictly less than the next set, and let these sets be indexed by $\{S_i\}$. Let $S_{first}$ be the first set in this ordering that is a subset of $D_{(a,b)}$ and $S_{last}$ be the last set in this ordering that is a subset of $D_{(a, b)}$. 

By completeness, every point in $S_{first}, \ldots, S_{last}$ is evaluated at some level, so the recursive traversal must go through $S_{first}, \ldots, S_{last}$, and since each of these sets is a subset of $D_{(a, b)}$, we will run the ANN search on Line~\ref{alg:query:line:ann} on each of $S_{first}, \ldots, S_{last}$. 

By construction, every $\beta$ sets in $\{S_i\}$ are partitions of a set from level $j + 1$ (e.g., sets $\{S_1, \ldots, S_\beta\}$, $\{S_{\beta + 1}, \ldots, S_{2\beta}\}, \ldots$). Thus, any subsequence of $\{S_i\}$ that is of length  $\ge 2\beta - 1$ must have at least one complete partition of a set from level $j + 1$. Since all of $S_{first}, \ldots, S_{last}$ are subsets of $D_{(a,b)}$, by the one time evaluation guarantee, we know that their parents cannot be subsets of $D_{(a, b)}$ (since then in the recursive traversal, we would have run an ANN search on their parent). Thus, $S_{first}, \ldots, S_{last}$ cannot contain a complete partition of a set from level $j + 1$, so the list $S_{first}, \ldots, S_{last}$ contains fewer than $2\beta - 1$ sets, or equivalently $last - first + 1 \le 2\beta - 2$.



We also know that at level $j$, there are at most two more sets, $S_{first - 1}$ and $S_{last + 1}$, that have a non-empty intersection with $D_{(a, b)}$ (these are the sets that potentially contain points with labels just larger and just smaller than $a$ and just larger and just smaller than $b$). 

This leads to our inductive hypothesis, which has three claims: 
\begin{enumerate}
    \item For all levels with $j' \le j$ (i.e., with $|S| = \beta^{j'}$), there can be at most two sets that have a non-empty intersection with $D_{(a, b)}$ that are not fully evaluated (i.e., that we recurse into).
    \item No set that we recurse into or evaluate on level $j' \le j$ is a superset of $D_{(a,b)}$. 
    \item We will run the ANN search on Line~\ref{alg:query:line:ann} at most $2\beta - 2$ times on each level.
\end{enumerate}

We have just shown the base case for $j' = j$. Now consider some $0 \le j' < j$. By part $1$ of the inductive hypothesis, we recurse into $2$ or fewer sets  on level $j' + 1$ that have a nonempty intersection with $D_{(a, b)}$. For each of these sets, at most $\beta - 1$ of its children will be a subset of $D_{(a, b)}$ (if all $\beta$ of its children were a subset of $D_{(a, b)}$, then the set itself would have been a subset of $D_{(a, b)}$ and would have been fully evaluated), and thus at most $2(\beta - 1) = 2\beta - 2$ ANN searches are run on level $j'$. This proves part $3$ of our inductive claim. Furthermore, since each of the at most two sets that we are recursing into are not a superset of $D_{(a, b)}$ by inductive claim $2$, there can only be at most one side of each of their label ranges that expand beyond $(a, b)$. Thus, when we partition each of these sets, only one child of each of the sets can have a label range that overlaps $(a, b)$; the rest will either have labels entirely in $(a, b)$ or entirely outside of $(a, b)$. Thus there will be at most two 
sets that have a nonempty intersection with $D_{(a, b)}$ that are not fully evaluated, proving inductive claim $1$. Finally, because each set that we recurse into on level $j'$ is not a superset of $D_{(a, b)}$, all of the children we recurse into that are subsets of these sets are also not supersets of $D_{(a, b)}$, proving inductive claim $2$. 

We do work in \cref{alg:query} on Line~\ref{alg:query:line:ann}, the loop on Line~\ref{alg:query:line:loop}, and Line~\ref{alg:query:return} (we do not do work on Line~\ref{alg:query:line:argmin1} because the $\argmin$ is just over one point; the $\argmin$ is necessary for the more general case where $N$ is a not a power of $j$ and the leaf nodes may have more than one point). 
As a direct result from the first part of our inductive claim, we have that we will only make it to the loop on Line~\ref{alg:query:line:loop} and the $\argmin$ on Line~\ref{alg:query:return} twice for each of the $\log_\beta(N)$ levels. The time complexity for the loop is $O(\beta)$, and the time complexity for Line~\ref{alg:query:return} is $O(\beta d)$ because the maximum size for the candidate list is $\beta$, and for each candidate in the list we spend $O(d)$ doing a distance computation.  Also, from the third part of our inductive claim, we have that we call ANN search on an index of size $m = \beta^j$ at most $2\beta - 2$ times for all $j \in {1, \ldots, \log_\beta(N)}$. Finally, again from the third part our inductive claim, we evaluate at most $2\beta  - 2$ sets of size $1$, so we evaluate Line~\ref{alg:query:line:argmin1} a maximum of $2\beta  - 2$ times. This gives the following total runtime for \cref{alg:query}:
$$O\left(\log_\beta(N) * (\beta + \beta d) + (2\beta - 2) * \sum_{j = 0}^{\log_\beta(N)} A_q(D, N * \beta^{-j})\right) = O\left(\beta\log_\beta(N)d + \beta * \sum_{j = 0}^{\log_\beta(N)} A_q(D, N * \beta^{-j})\right).$$
 \end{proof}

\commonfuncclasses*
 \begin{proof}
 For the first result, we have via substitution into \cref{thm:runtime}:
\begin{align*}
    &O\left(\beta\log_\beta(N)d + \beta\sum_{j=0}^{\log_\beta{N}} A_q(D, N \cdot \beta^{-j})\right)\\
    &= O\left(\beta\log_\beta(N)d + \beta\sum_{j=0}^{\log_\beta{N}} CdN^\rho \cdot \beta^{-j \rho}\right)\\
        &= O\left(\beta\log_\beta(N)d + C\beta dN^\rho\sum_{j=0}^{\log_\beta{N}} \beta^{-j \rho})\right)\\
        &= O\left(\beta\log_\beta(N)d + C\beta dN^\rho \frac{1}{1 - \beta^{-\rho}}\right)\\
        &= O\left(\frac{C\beta dN^\rho}{1 - \beta^{-\rho}}\right).
\end{align*}
and for the second result, we have via substitution into \cref{thm:runtime}: 
\begin{align*}
    &O\left(\beta\log_\beta(N)d + \beta\sum_{j=0}^{\log_\beta{N}} A_q(D, N \cdot \beta^{-j})\right)\\
    &= O\left(\beta\log_\beta(N)d + \beta\sum_{j=0}^{\log_\beta{N}} A_q(D)\right)\\
        &= O\left(\beta\log_\beta(N)d + \beta\log_\beta(N)A_q(D)\right)\\
        &= O\left(\beta\log_\beta(N)(d + A_q(D))\right).
\end{align*}
\end{proof}

\existingannruntime*
\begin{proof}
\cite{indyk2023worst} consider greedy search on an $\alpha$-Vamana graph built on a dataset $D$ with "slow preprocessing." They show that the search procedure is guaranteed to return a $(\frac{\alpha + 1}{\alpha - 1} + \epsilon)$-ANN in $O(\log_\alpha(\frac{\Delta}{(\alpha - 1) \epsilon}))$ steps, each of which take $O((4\alpha)^\delta \log \Delta)$ time. See \cref{app:diskann} for an explanation of $\alpha$ and a complete overview of Vamana. 

We can multiply together these two bounds on the running times to get an upper bound on the running time of the entire procedure:
$$O\left(\log_\alpha\left(\frac{\Delta}{\left(\alpha - 1\right) \epsilon}\right)\left(4\alpha\right)^\delta \log \Delta\right) = O\left(\log_\alpha\left(\frac{\Delta}{\left(\alpha - 1\right) (c - \frac{\alpha + 1}{\alpha - 1})}\right)\left(4\alpha\right)^\delta \log \Delta\right).$$
Note the equality is due to the fact that $c = \frac{\alpha + 1}{\alpha - 1} + \epsilon$. 

Now consider some $S \subset D$. We claim that the doubling dimension $\delta$ and the aspect ratio $\Delta$ for $S$ are no greater than for $D$.~\cite{indyk2023worst} describe doubling dimension as the the minimum value $\delta$ such that for any ball of radius $r$ centered at some point $x$ in $D$, $2^\delta$ balls of radius $r / 2$ can be arranged to cover all points in $D \cap B(x, r)$ (many previous works use the same or an extremely similar definition of doubling dimension (see, e.g.,~\cite{clarkson1997nearest, krauthgamer2004navigating}). Because $S$ is a subset of $D$, any covering of $D \cap B(x, r)$ is also a covering of $S \cap B(x, r)$ for all $x$ in $D$ (and also trivially therefore all $x$ in $S \subset D$), so we know that the doubling dimension of $S$ is at most the doubling dimension of $D$. Similarly,~\cite{indyk2023worst} use the aspect ratio of $D$, which is
$$\frac{\max_{x_1, x_2 \in D, x_1 \ne x_2} \texttt{dist}_V(x_1, x_2)}{\min_{x_1, x_2 \in D, x_1 \ne x_2} \texttt{dist}_V(x_1, x_2)}.$$
For any subset $S$ of $D$, let the two points corresponding to the smallest distance be $x_{small}$ and $y_{small}$ and the two points corresponding to the largest distance be $x_{large}$ and $y_{large}$. Since $S \subset D$, $x_{small}, y_{small}$ are also in $D$, so the smallest distance in $D$ is less than or equal to $\texttt{dist}_V(x_{small}, y_{small})$, and similarly $x_{large}, y_{large}$ are also in $D$, so the largest distance in $D$ is greater than or equal to $\texttt{dist}_V(x_{large}, y_{large})$. Thus compared to $\Delta$, the numerator of $\Delta_S$ is no larger and the denominator is no smaller, and so $\Delta_S \le \Delta$. In other words, the aspect ratio of any subset $S$ is less than the aspect ratio $\Delta$ of $D$. 

Because our running time result above is monotonic in $\Delta$ and $\delta$, and the other parameters $\alpha$ and $c$ are constant, we have that $c$-ANN search on any subset $S \subset D$ is upper bounded by the running time on the entire dataset $D$. Thus, since $O(A_q(D, m) = A_q(D))$, we can now plug in to \cref{lem:commonfuncclasses}, giving us our final result.
\end{proof}

\wstblowup*
\begin{proof}
To see that the worst case blowup factor is $O(N)$, consider the first split of $D$ into children $S_i$ for $i = 1, \ldots \beta$. These $S_i$ correspond to label ranges $\{[a_i, b_i]\}$, where $a_0 = 0$, $b_{last} = N$, and each $a_i = b_{i - 1} + 1$. Consider the range $(b_1, a_2)$. This range is not a subset of any $S_i$. Furthermore, because all smaller ranges further down the tree are strict subsets of some $S_i$, this range is also not a subset of any smaller range. Thus the smallest range that $S_i$ is a subset of is the top level range, so a $\beta$-WST has a worst case blowup of $\frac{N}{2} = O(N)$.

For the cost, we note that the label ranges of each level of the tree (except possibly the last, since it might be only partially full) are a partition of $\{1,\ldots,N\}$, so the sum of $b_i - a_i$ is equal to $N$ for all levels but the last. There are $\lfloor\log_\beta(N)\rfloor$ levels, and one possibly non-full level at the bottom of the tree which is smaller than or equal to a full partition of $\{1,\ldots,N\}$ and so has a cost less than or equal to $N$. Thus the total cost is bounded by $\lceil\log_\beta(N)\rceil \cdot N$.
\end{proof}

\begin{figure*}[t]
\begin{center}
\centerline{\includegraphics[width=9cm]{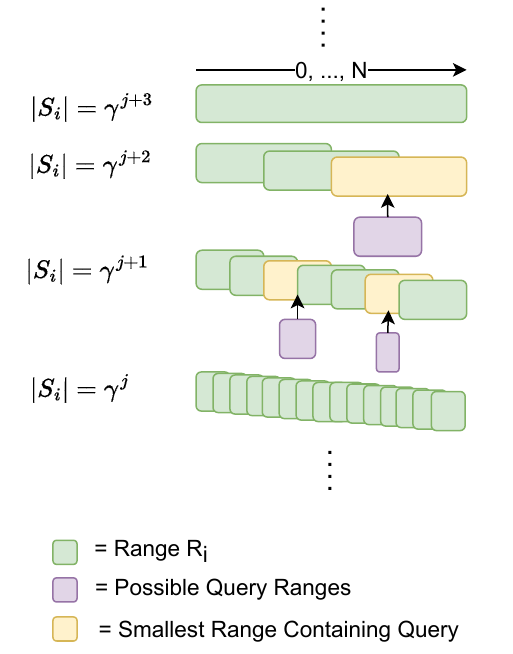}}
\caption{Illustration of structure of ranges for~\cref{thm:betterblowup}. }
\label{fig:theory-blowup-help}
\end{center}
\end{figure*}

\betterblowup*
\begin{proof}
At a high level, our approach is to devise a strategy that can ensure all sets of size $m$ have a blowup factor of $2$. We will then repeat this strategy for $m = \gamma^j$ for all possible powers of $j$, which will ensure that all possible ranges have a small worst case blowup factor. For a diagram of this structure see~\cref{fig:theory-blowup-help}.

First, consider the problem of choosing ranges $R_i$ such that every range of size $m$ is a subset of some $R_i$ with blowup factor equal to $2$. One approach is to choose ranges of $$cover(m) = \{[jm + 1, (j + 2)m ] \ |\ j \in \mathbb{Z}_{\ge 0}, (j + 2)m \le N\} \cup {[N - 2m + 1, N]}.$$

The ends of the ranges start at $2m$ and go until $N$ by multiples of $m$, for a total of $\lfloor \frac{N}{m} \rfloor - 1$ ranges. These, plus the additional range $[N - 2m + 1, N]$, lead to a total of $\lfloor \frac{N}{m} \rfloor$ ranges created using this strategy.  Each range has width $2m$, so the arrangement has cost $$\left\lfloor \frac{N}{m } \right\rfloor \cdot 2m \le 2N.$$

We now show that these ranges $R_i$ do indeed cover all ranges of size $m$ with blowup factor equal to $2$. Consider some range of length $m$ starting at $a$. If $a$ is within the first $m + 1$ integers in a range, then it is entirely within the range. Therefore, we are interested in the union of the first $m + 1$ integers in all of the ranges, or
$$\bigcup_{(j + 2)m \le N} \left\{[jm + 1, (j + 1) m + 1]\right\} \bigcup [N - 2m + 1, N - m + 1] \supset [1, N - 2m] \bigcup [N - 2m + 1, N - m + 1] = [1, N - m + 1].$$
This is all possible starting points for a range of length $m$, so $R_i$ does indeed cover all ranges of size $m$. Furthermore, each range is of size $2m$, so the blowup factor for these ranges of size $m$ is $2$. 



Now consider $R = \{cover(\gamma^j) \ |\ j \in \mathbb{Z}_{\ge 0}, \gamma^j < N\} \cup {(0, N)}$. We have that 
\begin{align*}
    cost &= \lfloor \log_{\gamma}(N) \rfloor \cdot 2N + N\\
    &\le 2N\log_{\gamma}(N) + N\\
    &= N\left(2\log_\gamma(N) + 1\right).
\end{align*}

Furthermore, we now show that $R$ has worst case blowup factor $2\gamma$.
Consider some range $r_q$ of size $m$. Consider the minimum $j$ such that $\gamma^j$ is greater than $m$. Let us first consider the case when $\gamma^j$ is less than $N$. Consider some range $r$ of size $\gamma^j$ that contains $r_q$. By the definition of $cover(\gamma^j)$, there is some range of size $2\gamma^j$ that contains $r$ and that therefore contains $r_q$. Furthermore, since this $j$ is the minimum $j$ such that $\gamma^j > m$, we have that $m > \gamma^{j - 1}$. Thus the maximum blowup factor for $r_q$ is less than $2\gamma^j / \gamma^{j - 1} = 2\gamma$. In the case where $\gamma^j \ge N$, the smallest containing range is $(0, N)$. We have that $m > \gamma^{j - 1} \ge N / \gamma$, so $N / m < \gamma$ and thus the blowup factor for $r_q$ is less than $\gamma$ (and also less than $2\gamma$).

\end{proof}

\section{ChatGPT Queries for RedCaps Query Generation}
\label{app:redcaps_gpt}
Queries were generated in two sessions with ChatGPT-4. The first consisted of the first query and the second query repeated 4 times. The second consisted of 100 examples copied from the first session and the third and four query. The first query:
\begin{quote}
I wish to run an experiment where I generate many possible text queries for my image search system. Can you help me generate queries? I want you to make the queries casual, and be as varied and creative as possible. To the best of your ability, don't repeat yourself! Here are a few example queries:
"Funny cat memes"
"C++ coding joke"
"Vegan recipe with blueberries".    
\end{quote}
Repeated second query:
\begin{quote}
    Can you generate 100 more? And still make sure to be as creative and casual as possible, and don't repeat things you've already said! Additionally, try to use unique semantic and syntactic structure where possible.
\end{quote}
Third and fourth queries:
\begin{quote}
    Try not to generate queries with locations in them, because I already have lot's of those. Thank you!
\end{quote}
and 
\begin{quote}
    Great, now I just need 100 more, again as little repeats as possible, be creative! These can be more "internet" language, so things like, e.g., "funny cat memes."
\end{quote}

\section{Vamana Primer}
\label{app:diskann}

Vamana~\cite{diskann} is an approximate nearest neighbor algorithm that builds a graph on the input dataset $D$. The entire system from \cite{diskann}, including checkpointing and error recovery, is called DiskANN; Vamana is solely the in-memory ANNS component.

To construct the ``slow-preproccessing" variant, which is the variant with theoretical guarantees from~\cite{indyk2023worst}, for each point $x$, we first connect $x$ to all points. We then sort all other points in the graph in terms of increasing distance from $x$. Starting from the first point $y$ in the list, we prune edges from $x$ to all other points $y'$ where $\alpha \cdot \texttt{dist}_V(y, y') \le \texttt{dist}_V(x, y)$. We repeat this pruning process with the next closest unpruned point until we reach the end of the list; the unpruned points are the neighbors of $x$.

To construct the ``fast-preprocessing" variant, which is the variant used in practice, we start with an empty graph. We then do a beam search query for the nearest neighbor for all points $x \in D$, twice, building up the graph as we go. For each  search, we record all points traversed in the beam search, along with the nearest neighbor if it was not found, and then add these as neighbors to $x$. We then prune this list using the same  heuristic we use for the ``slow-preprocessing" variant, and also enforce with a hard cutoff that there are at maximum $\degree{}$ number of neighbors.

A beam search of size $B$ is a generalization of a greedy search. Given a query $x$, we start at a start node $s$ and ``explore" $s$ by adding neighbors of $s$ to a queue. This queue consists of the closest $B$ points to $x$ we have seen so far, explored or unexplored. We continually explore the closest unexplored node from the queue to $x$ until all nodes in the queue are explored. By increasing $B$, the beam search explores more points and is more likely to find a better nearest neighbor of $x$. We do beam searches for ``fast preprorocessing" index construction and for approximate nearest neighbor queries.

\section{Experiments}\label{sec:experiments-appendix}

\subsection{Dataset License Information}
\label{app:licensing}
\datasetname{SIFT}, \datasetname{GloVE}, and \datasetname{DEEP} are released under an MIT license by the ANN benchmarks repository~\cite{aumuller2020ann}. The original RedCaps license has a
restriction to non-commercial use, so our modified \datasetname{Redcaps} dataset
is released under the same restriction. Since we generated the \datasetname{Adverse} dataset ourselves, we release it under an MIT license.

\subsection{Pareto Frontiers}\label{sec:pareto}
See \cref{fig:glove_results}, \cref{fig:sift_results}, and \cref{fig:redcaps_results}  for full Pareto frontier plots for a representative sample of filter widths on all datasets not included in the main text.

\begin{figure*}[t]
\begin{center}
\centerline{\includegraphics[width=16cm]{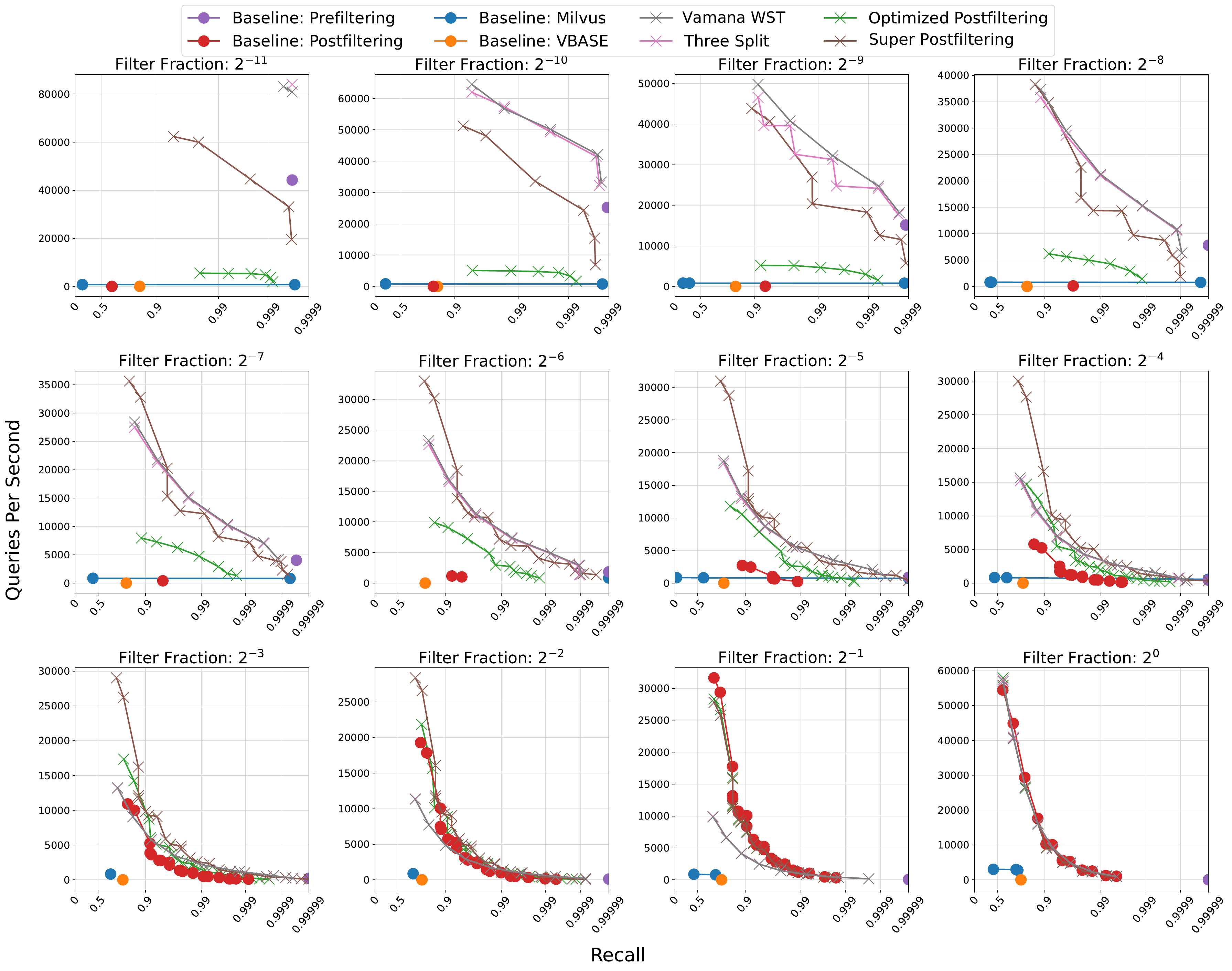}}
\caption{Comparison of Pareto frontiers of all methods on window search with different filter fractions on \datasetname{GloVe}. Up and to the right is better. On the medium filter fraction settings, our methods achieve multiple orders of magnitude more
queries per second than the baselines at the same recall levels. All methods are run with $16$ threads.}
\label{fig:glove_results}
\end{center}
\end{figure*}

\begin{figure*}[t]
\begin{center}
\centerline{\includegraphics[width=16cm]{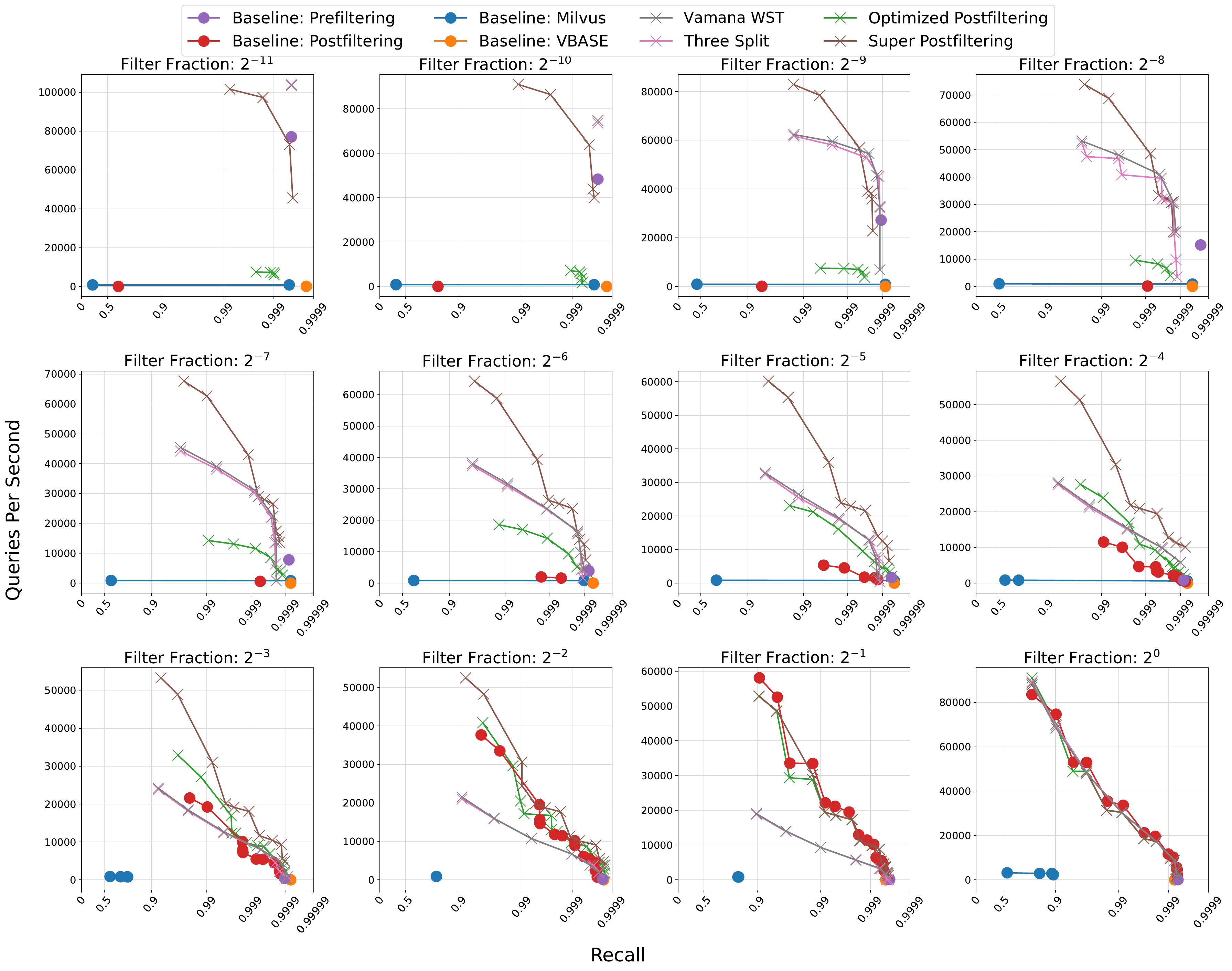}}
\caption{Comparison of Pareto frontiers of all methods on window search with different filter fractions on \datasetname{SIFT}. Up and to the right is better. On the medium filter fraction settings, our methods achieve multiple orders of magnitude more
queries per second than the baselines at the same recall levels. All methods are run with $16$ threads. }
\label{fig:sift_results}
\end{center}
\end{figure*}

\begin{figure*}[t]
\begin{center}
\centerline{\includegraphics[width=16cm]{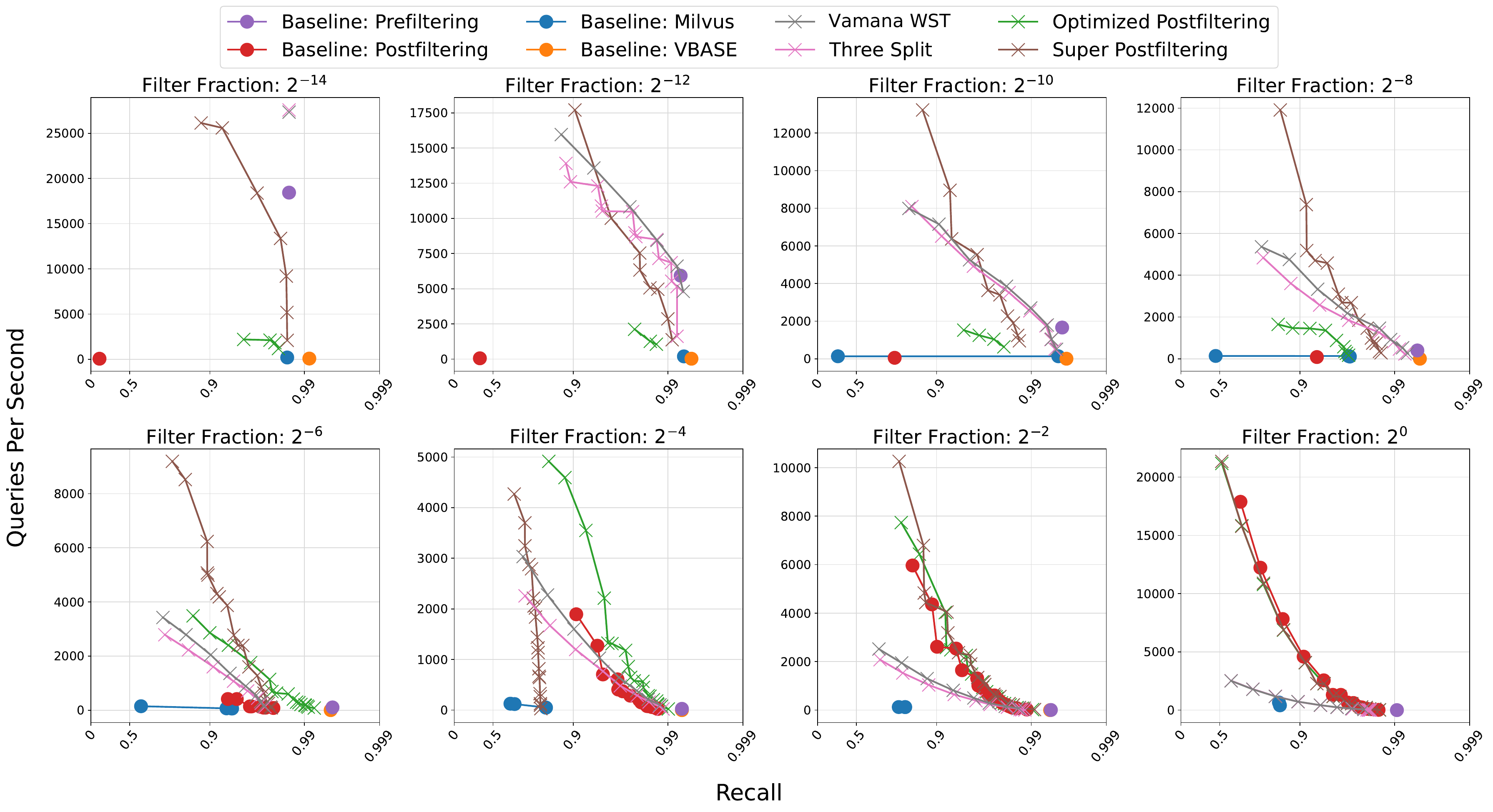}}
\caption{Comparison of Pareto frontiers of all methods on window search with different filter fractions on \datasetname{Redcaps}. Up and to the right is better. On the medium filter fraction settings, our methods achieve multiple orders of magnitude more
queries per second than the baselines at the same recall levels. All methods are run with $16$ threads. }
\label{fig:redcaps_results}
\end{center}
\end{figure*}

\subsection{Comments on Memory and Performance}
We expect our method to scale well to larger datasets because our theory predicts only a $\log_\beta N$ factor increase in memory cost over a single Vamana index constructed on $D$. Our implementation also addresses memory size further by not constructing tree nodes that represent subsets smaller than $1000$ points and only storing the dataset $D$ once. \cref{fig:branching_buildtime_memory} shows the memory and construction time of a $\beta$-WST tree constructed for SIFT as we increase $\beta$. For large $\beta$, a $\beta$-WST tree has an only slightly larger memory footprint (about 2X) than a single ANN
index, and for the theoretical and implementation reasons above we expect this trend to hold for larger
datasets.

Finally, we run additional performance experiments. See~\cref{fig:full_branching_filter_widths} for comparisons of varying $\beta$ on \datasetname{SIFT} across all filter fractions, and see~\cref{tab:all_speedups} for speedups of our best method over the best baseline across recall levels $0.8$, $0.9$, $0.99$, and $0.995$.


\begin{figure}
    \centering
    \includegraphics[width=10cm]{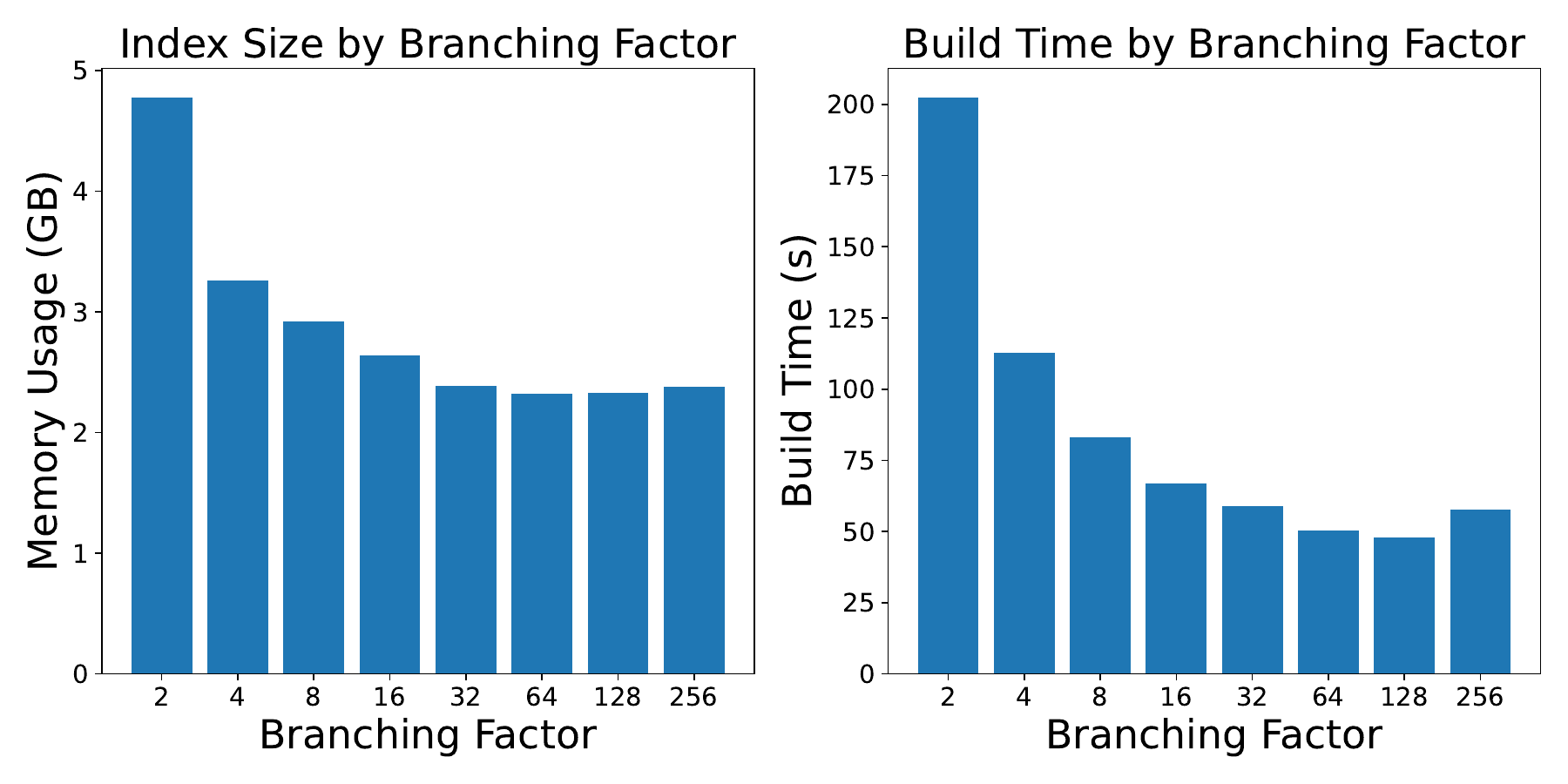}
    \caption{Plots of index size and build time for varying branching factors $\beta$ for \algname{Vamana WST} on \datasetname{SIFT}. The indices were built using 96 threads. Smaller values are better. }
    \label{fig:branching_buildtime_memory}
\end{figure}

\begin{figure}
    \centering
    \includegraphics[width=16cm]{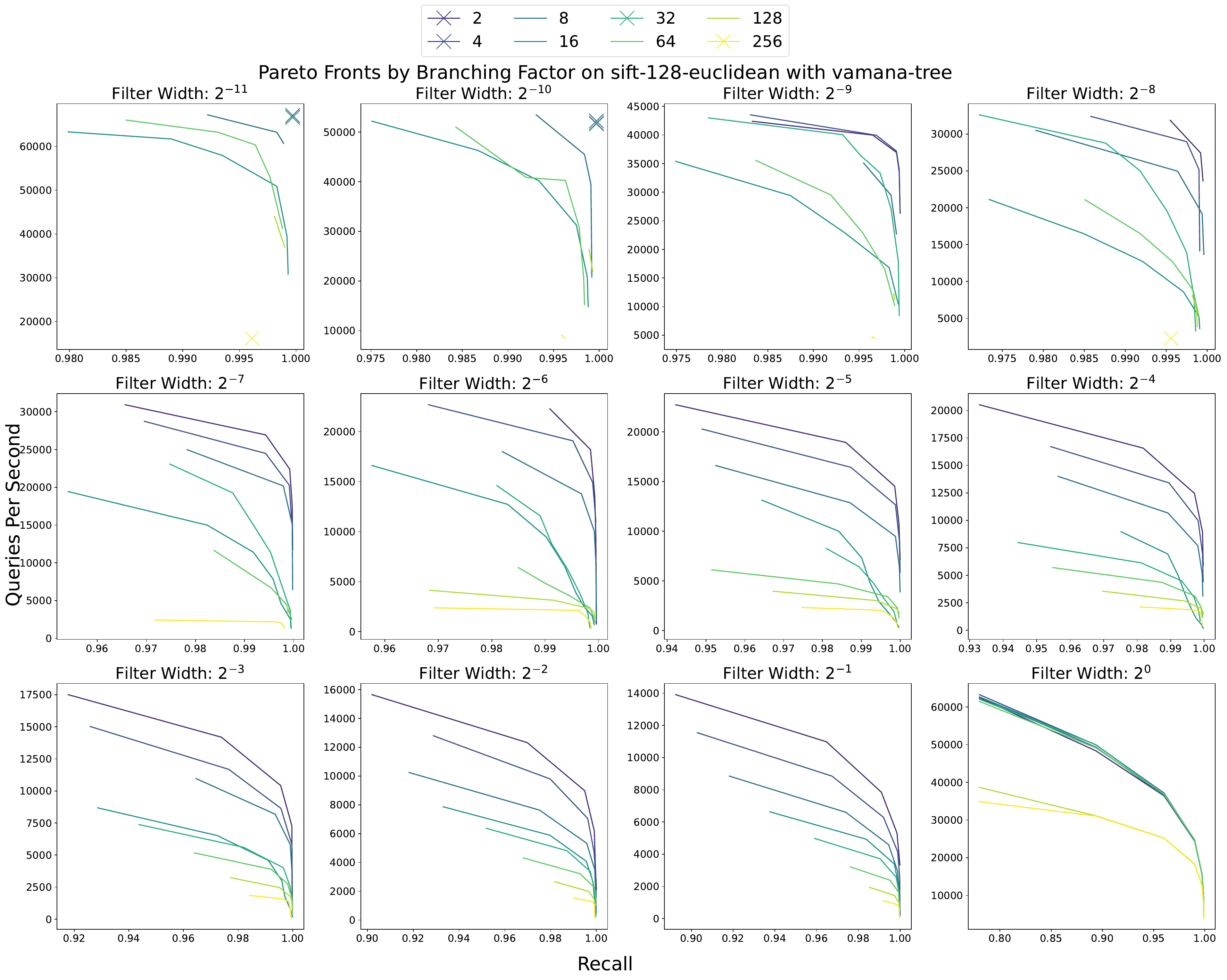}
    \caption{Pareto curves of recall vs.\ throughput on \datasetname{SIFT} for varying window sizes and branching factors $\beta$ for \algname{Vamana WST}. The experiment was run using 16 threads. Up and to the right is better.}
    \label{fig:full_branching_filter_widths}
\end{figure}

\begin{table*}
\centering
\caption{Speedup of our best method over the best baseline, restricted to hyper-parameter settings that yield the recall in parenthesis in the dataset column. N/A means that none of our methods achieved that recall (on \datasetname{Redcaps} this is due to poor Vamana graph quality).}
\begin{tabular}{|l|rrrrrrrrrrrr|}
\toprule
Dataset & $2^{-11}$ & $2^{-10}$ & $2^{-9}$ & $2^{-8}$ & $2^{-7}$ & $2^{-6}$ & $2^{-5}$ & $2^{-4}$ & $2^{-3}$ & $2^{-2}$ & $2^{-1}$ & $2^{0}$ \\
\hline
\datasetname{Deep} (0.8) & 10.49 & 18.46 & 35.65 & 65.40 & 84.88 & 26.23 & 10.23 & 4.59 & 2.37 & 1.33 & 0.78 & 0.79 \\
\datasetname{SIFT} (0.8) & 1.35 & 1.88 & 3.05 & 4.87 & 8.68 & 16.51 & 11.26 & 4.92 & 2.47 & 1.39 & 0.91 & 0.94 \\
\datasetname{GloVe} (0.8) & 1.90 & 2.56 & 3.29 & 4.90 & 8.82 & 16.37 & 10.57 & 4.77 & 1.49 & 0.90 & 0.90 & 0.92 \\
\datasetname{Redcaps} (0.8) & 5.10 & 7.95 & 11.86 & 29.94 & 36.46 & 20.69 & 5.89 & 2.59 & 3.27 & 1.14 & 0.88 & 0.88 \\
\hline
\datasetname{Deep} (0.9) & 10.49 & 18.46 & 35.65 & 65.40 & 84.88 & 26.23 & 10.23 & 4.26 & 2.18 & 1.23 & 0.75 & 0.76 \\
\datasetname{SIFT} (0.9) & 1.35 & 1.88 & 3.05 & 4.87 & 8.68 & 16.51 & 11.26 & 4.92 & 2.47 & 1.39 & 0.91 & 0.94 \\
\datasetname{GloVe} (0.9) & 1.90 & 2.56 & 3.29 & 4.46 & 5.38 & 9.97 & 6.96 & 4.04 & 1.87 & 1.57 & 0.90 & 0.90 \\
\datasetname{Redcaps} (0.9) & 3.18 & 5.38 & 8.92 & 18.55 & 19.94 & 10.46 & 4.33 & 1.87 & 1.70 & 1.55 & 0.89 & 0.89 \\
\hline
\datasetname{Deep} (0.99) & 6.80 & 11.77 & 22.05 & 40.06 & 50.55 & 9.86 & 4.33 & 3.70 & 1.58 & 1.47 & 0.75 & 0.75 \\
\datasetname{SIFT} (0.99) & 1.35 & 1.79 & 2.88 & 4.53 & 8.04 & 10.10 & 6.73 & 2.88 & 1.61 & 1.01 & 0.88 & 0.91 \\
\datasetname{GloVe} (0.99) & 1.90 & 1.99 & 2.13 & 1.96 & 3.03 & 4.02 & 6.00 & 5.71 & 4.66 & 2.67 & 0.95 & 0.92 \\
\datasetname{Redcaps} (0.99) & 1.30 & 1.08 & 1.50 & 1.32 & 1.36 & 1.87 & 3.11 & 0.86 & 1.82 & 3.75 & N/A & N/A \\
\hline
\datasetname{Deep} (0.995) & 6.80 & 11.77 & 22.05 & 26.07 & 32.98 & 11.31 & 9.91 & 2.41 & 1.98 & 1.49 & 0.75 & 0.78 \\
\datasetname{SIFT} (0.995)  & 1.35 & 1.79 & 2.88 & 3.20 & 5.51 & 10.10 & 6.73 & 2.16 & 1.98 & 0.96 & 0.89 & 0.88 \\
\datasetname{GloVe} (0.995)  & 1.90 & 1.99 & 1.63 & 1.96 & 2.56 & 3.28 & 3.93 & 4.64 & 4.58 & 2.97 & 0.94 & 0.93 \\
\datasetname{Redcaps} (0.995)  & N/A & 0.30 & N/A & N/A & N/A & N/A & N/A & N/A & N/A & N/A & N/A & N/A\\
\hline
\end{tabular}
\label{tab:all_speedups}
\end{table*}

\end{document}